\tikzstyle{lcm}=[rectangle,draw=black!50,fill=white,text width=3.5mm,align=center]
\tikzstyle{schedot}=[coordinate]
\tikzstyle{sched}=[coordinate,rectangle,text width=3.5mm]
\tikzstyle{cycle}=[rectangle,rounded corners=5pt,fill=black!10]
\tikzset{
		>=latex,
		node distance=0.2\columnwidth
}
\newcommand\etal{\mbox{\xspace\emph{et al.}}\xspace}
\newcommand\CENTRALIZED{\mbox{Centralized}\xspace}
\newcommand\FSYNC{\mbox{FSYNC}\xspace}
\newcommand\SSYNC{\mbox{SSYNC}\xspace}
\newcommand\ASYNC{\mbox{ASYNC}\xspace}
\newcommand\ASYNCLC{\mbox{LC-atomic \ASYNC}\xspace}
\newcommand\ASYNCMOVE{\mbox{Move-atomic \ASYNC}\xspace}
\newcommand\WAIT{\mbox{\textsc{Wait}}\xspace}
\newcommand\LOOK{\mbox{\textsc{Look}}\xspace}
\newcommand\COMPUTE{\mbox{\textsc{Compute}}\xspace}
\newcommand\BEGMOVE{\ensuremath{\mbox{\textsc{Move}}_{\mbox{\textsc{b}}}}\xspace}
\newcommand\ENDMOVE{\ensuremath{\mbox{\textsc{Move}}_{\mbox{\textsc{e}}}}\xspace}
\newcommand\MOVE{\mbox{\textsc{Move}}\xspace}
\newcommand\Look{\mbox{\textsc{L}}\xspace}
\newcommand\Compute{\mbox{\textsc{C}}\xspace}
\newcommand\Begmove{\ensuremath{\mbox{\textsc{B}}}\xspace}
\newcommand\Endmove{\ensuremath{\mbox{\textsc{E}}}\xspace}
\newcommand\STAY{\mbox{\textsc{Stay}}\xspace}
\newcommand\OTHER{\mbox{\textsc{M2O}}\xspace}
\newcommand\HALF{\mbox{\textsc{M2H}}\xspace}
\newcommand\MISS{\mbox{\textsc{Miss}}\xspace}
\newcommand\BOT{\ensuremath{\bot}\xspace}
\newcommand\FAR{\mbox{F{\scriptsize{AR}}}\xspace}
\newcommand\NEAR{\mbox{N{\scriptsize{EAR}}}\xspace}
\newcommand\SAME{\mbox{S{\scriptsize{AME}}}\xspace}
\newcommand\BLACK{\mbox{\textsc{Black}}\xspace}
\newcommand\WHITE{\mbox{\textsc{White}}\xspace}
\newcommand\RED{\mbox{\textsc{Red}}\xspace}
\newcommand\YELLOW{\mbox{\textsc{Yellow}}\xspace}
\newcommand\GREEN{\mbox{\textsc{Green}}\xspace}
\newtheorem{Lemma}{Lemma}
\newtheorem{Theorem}{Theorem}
\newtheorem{Definition}{Definition}
\title{Using Model Checking to Formally Verify Rendezvous Algorithms for Robots with Lights in Euclidean Space%
    \thanks{%
            Research partly supported by JST SICORP and JSPS KAKENHI Grant No.\,17K00019.
    }
}
\author{
    Xavier Défago$^a$, Adam Heriban$^b$, Sébastien Tixeuil$^b$, and Koichi Wada$^c$
    \medskip
    \\ $^a$School of Computing, Tokyo Institute of Technology, Japan
    \\ $^b$Sorbonne Université, CNRS, LIP6, Paris, France
    \\ $^c$Faculty of Science and Engineering, Hosei University, Tokyo, Japan
}
\date{}
\begin{document}

\maketitle

\begin{abstract}
The paper details the first successful attempt at using model-checking techniques to verify the correctness of distributed algorithms for robots evolving in a \emph{continuous} environment. The study focuses on the problem of rendezvous of two robots with lights.

There exist many different rendezvous algorithms that aim at finding the minimal number of colors needed to solve rendezvous in various synchrony models (\emph{e.g.}, \FSYNC, \SSYNC, \ASYNC). While these rendezvous algorithms are typically very simple, their analysis and proof of correctness tend to be extremely complex, tedious, and error-prone as impossibility results are based on subtle interactions between robots activation schedules.

The paper presents a generic verification model written for the SPIN model-checker. In particular, we explain the subtle design decisions that allow to keep the search space finite and tractable, as well as prove several important theorems that support them. As a sanity check, we use the model to verify several known rendezvous algorithms in six different models of synchrony. In each case, we find that the results obtained from the model-checker are consistent with the results known in the literature. The model-checker outputs a counter-example execution in every case that is known to fail.

In the course of developing and proving the validity of the model, we identified several fundamental theorems, including the ability for a well chosen algorithm and \ASYNC scheduler to produce an emerging property of memory in a system of oblivious mobile robots, and why it is not a problem for luminous rendezvous algorithms.
\end{abstract}

\section{Introduction}
Since the seminal work of Suzuki and Yamashita \cite{SuzukiY99}, much research on cooperative mobile robots has aimed at identifying the minimal assumptions (in terms of synchrony, sensing capabilities, environment, etc.) under which basic problems such as gathering or rendezvous can be solved.

Robots are modelled as mathematical points in the 2D Euclidean plane that independently execute their own instance of the same algorithm.
In the model we consider, robots are anonymous (\emph{i.e.}, indistinguishable from each-other), oblivious (\emph{i.e.}, no persistent memory of the past is available), and disoriented (\emph{i.e.}, they do not agree on a common coordinate system). The robots operate in Look-Compute-Move cycles. In each cycle a robot ``Looks'' at its surroundings and obtains (in its own coordinate system) a snapshot containing the locations of all robots. Based on this visual information, the robot ``Computes'' a destination location (still in its own coordinate system) and then ``Moves'' towards the computed location. Since the robots are identical, they all follow the same deterministic algorithm. The algorithm is oblivious if the computed destination in each cycle depends only on the snapshot obtained in the current cycle (and not on the past history of execution). The snapshots obtained by the robots are not consistently oriented in any manner (that is, the robots' local coordinate systems do not share a common direction nor a common chirality%
\footnote{Chirality denotes the ability to distinguish left from right.}%
).

The execution model significantly impacts the ability to solve collaborative tasks. Three different levels of synchronization have been commonly considered. The strongest model is the fully-synchronous (\FSYNC) model~\cite{SuzukiY99} where each phase of each cycle is performed simultaneously by all robots. The semi-synchronous (\SSYNC) model~\cite{SuzukiY99} considers that time is discretized into rounds, and that in each round an arbitrary yet non-empty subset of the robots are active. The robots that are active in a particular round perform exactly one atomic Look-Compute-Move cycle in that round. It is assumed that the scheduler (seen as an adversary) is fair in the sense that in each execution, every robot is activated infinitely often. The weakest model is the asynchronous  (\ASYNC) model~\cite{DBLP:journals/tcs/FlocchiniPSW05,2012Flocchini}, which allows arbitrary delays between the Look, Compute and Move phases, and the movement itself may take an arbitrary amount of time. 


\subsection{Gathering}
The gathering problem is one of the benchmarking tasks in mobile robot networks, and has received a considerable amount of attention (\emph{e.g.}, \cite{DBLP:journals/siamcomp/AgmonP06,DBLP:journals/trob/AndoOSY99,DBLP:conf/icdcs/Bouzid0T13,DBLP:conf/algosensors/CiceroneSN14,DBLP:journals/siamcomp/CohenP05,DBLP:conf/wdag/DefagoGMP06,DBLP:journals/tcs/FlocchiniPSW05,HeribanDT18,DBLP:journals/siamcomp/IzumiSKIDWY12,DBLP:conf/opodis/OkumuraWD18,DBLP:journals/taas/SouissiDY09,SuzukiY99,2019Flocchini}).
The gathering task consists in all robots reaching a single point, not known beforehand, in finite time. The particular case of gathering two robots is called \emph{rendezvous} in the literature.
A foundational result~\cite{SuzukiY99} shows that in the \SSYNC model, no deterministic algorithm can solve rendezvous without additional assumptions. This impossibility result naturally extends to the \ASYNC model~\cite{DBLP:journals/tcs/FlocchiniPSW05}. 

To circumvent the aforementioned impossibility result, it was proposed to endow each robot with a \emph{light}~\cite{0001FPSY16}, that is, it is capable of emitting one color among a fixed number of available colors, visible to all other robots. This additional capacity allows to solve rendezvous in the most general \ASYNC model, provided that lights of robots are capable to emit at least \emph{four} colors. In the more restricted \SSYNC model, Viglietta~\cite{Viglietta13} proved that being able to emit two colors is sufficient to solve the rendezvous problem. In the same paper~\cite{Viglietta13}, Viglietta proves~\cite{Viglietta13} that three colors are sufficient in \ASYNC. Both solutions in \ASYNC~\cite{0001FPSY16,Viglietta13} and \SSYNC~\cite{Viglietta13} output a correct behavior independently of the initial value of the lights' colors. Recently, Okumura\etal~\cite{OkumuraWK17} presented an rendezvous algorithm with two colors in \ASYNC assuming \emph{rigid} moves (that is, the move of every robot is never stopped by the scheduler before completion), or assuming non-rigid moves but robots being aware of $\delta$ (the minimum distance before which the scheduler cannot interrupt their move). Also, the solution of Okumura\etal~\cite{OkumuraWK17} requires lights to have a specific color in the initial configuration. Finally, Heriban\etal~\cite{HeribanDT18} show that two colors are necessary and sufficient in \ASYNC without extra assumptions. 

However, all aforementioned results were proved in an ad-hoc manner (\emph{i.e.}, with handwritten proofs). As they often use case-based reasoning, are lengthy and complex, they are tedious to check and thus to write, hence error-prone. 

\subsection{Related work}
Designing and proving mobile robot protocols is notoriously difficult. Formal methods encompass a long-lasting path of research that is meant to overcome errors of human origin. Unsurprisingly, this mechanized approach to protocol correctness was used in the context of mobile robots~\cite{BonnetDPPT14,DevismesLPRT12,BerardLMPTT16,AugerBCTU13,MilletPST14,CourtieuRTU15, berard15infsoc,RubinZMA15,DevismesLPRT12,SangnierSPT17,BalabonskiPRT18}.

When robots move freely in a continuous two-dimensional Euclidean space (as considered in this paper), to the best of our knowledge the only formal framework available is Pactole.\footnote{\url{http://pactole.lri.fr}}
It relies on higher-order logic to certify impossibility results~\cite{AugerBCTU13,CourtieuRTU15,BalabonskiPRT18}, as well as the correctness of algorithms~\cite{CourtieuRTU16,DevismesLPRT12} in the \FSYNC and \SSYNC models, possibly for an arbitrary number of robots (hence in a scalable manner). Pactole was recently extended by Balabonski\etal~\cite{BalabonskiCPRTU18} to handle the \ASYNC model, thanks to its modular design. However, in its current form, Pactole lacks automation; that is, in order to prove a result formally, one still has to write the proof (that is automatically verified), which requires expertise both in Coq (the language Pactole is based upon) and about the mathematical and logical arguments one should use to complete the proof. 

On the other side, model checking and its derivatives (automatic program synthesis, parameterized model checking) hint at more automation once a suitable model has been defined with the input language of the model checker. 
In particular, model-checking proved useful to find bugs (usually in the \ASYNC setting)~\cite{BerardLMPTT16,DoanBO16,Doan0017} and to formally check the correctness of published algorithms~\cite{DevismesLPRT12,BerardLMPTT16,RubinZMA15}. Automatic program synthesis~\cite{BonnetDPPT14,MilletPST14} was used to obtain automatically algorithms that are ``correct-by-design''. However, those approaches are limited to instances with few robots. Generalizing them to an arbitrary number of robots with similar models is doubtful as Sangnier\etal~\cite{SangnierSPT17} proved that safety and reachability problems are undecidable in the parameterized case. Another limitation of the above approaches is that they \emph{only} consider cases where mobile robots \emph{evolve in a \textbf{discrete} space} (\emph{i.e.}, graph). This limitation is due to the model used, that closely matches the original execution model by Suzuki and Yamashita~\cite{SuzukiY99}. As a computer can only model a finite set of locations, a continuous 2D Euclidean space cannot be expressed in this model.

Overall, the only way to obtain automated proofs of correctness in the continuous space context through model checking is to use a more abstract model.

\subsection{Our contribution}
We present a new formal model and related proof techniques enabling the use of model checking tools for mobile robots evolving in a 2D Euclidean space. Because of the nature of rendezvous, we abstract the Euclidean space into a single line, which is itself discretized into a finite number of well-defined distance states. The number of those states depends on the exact hypotheses of the verified algorithm. Our process relies on the fact that rigid-motion algorithms are, by definition, distance-independent, and so limiting the space model to two states: \{gathered, non-gathered\} is sufficient to properly model the distance. We also prove that, in the case of self-stabilizing algorithms, rigid and non-rigid motions are equivalent. Proving non-rigid, non-self-stabilizing algorithms is, however, more complex, and we propose a method to check them properly. We propose a way to prove algorithms such as the one presented in Okumura\etal~\cite{OkumuraWK17}.

We also show that proving any algorithm for the \ASYNC scheduler is, unsurprisingly, extremely difficult. We prove that for a well-chosen scheduler, algorithm, and initial configuration, it is possible for every configuration in the execution to depend on the initial configuration even when robots are oblivious. However, we point out a structural property satisfied by known rendezvous algorithms that enables reducing the verification space tremendously. 

\section{Definitions and Terminology}
\subsection{System model}

The system consists of a finite set of dimensionless robots evolving in a boundless 2D Euclidean space, devoid of any landmarks or obstacles.
The robots cannot explicitly communicate with each other and do not share any notion of a global coordinate system as defined by a common origin, unit distance, or directions and orientations of the axes. These coordinate systems are not assumed to be consistent during the execution.

Robots are \emph{anonymous} in that they are unaware of an identity and execute the exact same algorithm consisting of cycles of the basic operations: \LOOK, \COMPUTE, \MOVE. We also consider a \WAIT phase between the end of the \MOVE phase and the beginning of the \LOOK phase.
Robots are \emph{oblivious} in the sense that their new destination only depends on the current snapshot.

Robots' activations are independent and governed by an adversary scheduler acting according to a synchrony model which can vary from \CENTRALIZED (where activation cycles execute in mutual exclusion) or \FSYNC (fully synchronous; where activation cycles always occur in parallel) to \SSYNC (semi-synchronous, where the scheduler can choose between \FSYNC and \CENTRALIZED) and \ASYNC (asynchronous; where activation steps occur independently). We also consider the more recent \ASYNCLC and \ASYNCMOVE schedulers~\cite{OkumuraWK17}, which are variants of the \ASYNC for which the \LOOK and \COMPUTE, or the entire \MOVE phase, respectively, must happen atomically.
The scheduler may also have the ability to influence the movement of robots. In the case of \emph{rigid} motion, a robot always reaches its target upon completion of its \MOVE phase. In the case of \emph{non-rigid} motion, the scheduler may stop the robot before it reaches its target, but not before it has travelled a minimum distance of $\delta>0$, usually unknown to the robots.

We also consider a fair scheduler, which activates every robot infinitely often.

Additionally, in this paper we consider that robots are equipped with a light that can emit a color among a fixed number of different colors. A robot can observe the color of all lights during its \LOOK phase and change the color of its light at the end of its \COMPUTE phase. During the compute phase, the snapshot may lead to a deterministic change in color. Then, we say the new color is \emph{pending} until the \COMPUTE phase has ended. Similarly, we call \emph{target} (or pending move) the destination dictated by a robot's snapshot that it tries to reach in its next \MOVE phase. More specifically, this is called the "\emph{full light}" model. Other variants in the literature consider the cases where a robot can only observe the light of the other robots (\emph{external light} model), or can only observe its own light (\emph{internal light} model). This paper considers both the full and external light models.

A more detailed model is provided in the appendix.

\subsection{Configurations and executions}

The union of the local states (position, color, phase, pending move and color) 
of all robots defines a \emph{configuration}. An \emph{execution} is a sequence, possibly infinite, starting in an initial configuration and where each transition corresponds to the activation of a subset of the robots according to the constraints of the scheduler.

\subsection{Rendezvous problem}

The \emph{gathering} problem requires that all robots eventually reach the same location from any initial location and regardless of activations decided by the scheduler.

The \emph{rendezvous} problem is another name for the gathering of two robots. Intuitively, the problem may seem simpler to solve due to the smaller number of robots, but this is actually the opposite, due to symmetry. Indeed, with only two robots and no lights, all configurations are symmetrical and hence convey no information other than distance. Because of the lack of a common coordinate system, this information is limited to the binary \{gathered, non-gathered\}.

\subsection{Self-Stabilization}

A rendezvous algorithm is self-stabilizing if robots eventually reach and stay forever at the same location regardless of the \emph{initial configuration}. Algorithms that set constraints on the initial configuration (\emph{e.g.}, must start with a specific color) are not self-stabilizing.
We introduce a more refined definition of self-stabilization.

\begin{Definition}[\emph{Simple} Self-Stabilization]
An algorithm is \emph{simply self-stabilizing} for problem $\mathcal{P}$ if it solves $\mathcal{P}$ starting from any initial position and any initial color, with all robots in the \WAIT phase.
\end{Definition}

\begin{Definition}[\emph{Complete} Self-Stabilization]
An algorithm is \emph{completely self-stabilizing} for problem $\mathcal{P}$ if it solves $\mathcal{P}$ from any initial position, color, phase, target and pending color.
\end{Definition}

Following the same terminology, an initial configuration where both robots are in the \WAIT phase is called a \emph{simple} initial configuration. Otherwise it is a \emph{complete} initial configuration. We similarly define \emph{simple} executions and \emph{complete} executions.
If a \emph{complete} execution has a common suffix with a \emph{simple} execution, we say it is \emph{simple-reachable}.

\section{From the system model to the verification model}
\label{sec:theorems}

Implementing the system model we consider into a \emph{verification model} that can be checked by a model-checker is difficult, as some elements are continuous (position of both robots, pending moves of both robots). This section is dedicated to proving that those problematic elements can be discretized in a way that enables mechanized verification. 

\subsection{\emph{Simple} vs. \emph{Complete} self-stabilization}

This subsection is dedicated to proving that pending moves and pending colors can be removed from the verification model in the case of self-stabilizing rendezvous algorithms. This is true for all self-stabilizing algorithms under the \FSYNC, \CENTRALIZED, and \SSYNC schedulers (Lemma~\ref{Lem:SimToComSSYNC}), and true for specific self-stabilizing algorithms under the \ASYNC scheduler (Theorem~\ref{Thm:SoNice}). 

\begin{Lemma}
\label{Lem:ComToSim}
Any completely self-stabilizing algorithm is also simply self-stabilizing.
\end{Lemma}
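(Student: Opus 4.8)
The plan is to argue directly from the definitions. Recall that a completely self-stabilizing algorithm solves the problem starting from \emph{any} initial position, color, phase, target, and pending color, whereas a simply self-stabilizing algorithm need only solve it starting from any initial position and color with all robots in the \WAIT phase. The key observation is that the class of simple initial configurations is a subset of the class of complete initial configurations: a configuration in which both robots are in the \WAIT phase is in particular a configuration with \emph{some} phase (namely \WAIT), and in the \WAIT phase there is no meaningful pending move or pending color (a robot that has not yet looked has not computed a target or a new color, so these components are trivial/undefined and can be fixed to any canonical value).

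First I would make precise that a simple initial configuration is exactly a complete initial configuration in which the phase component of every robot equals \WAIT and the target and pending-color components are set to their canonical ``none'' value. Second, I would observe that the set of executions starting from simple initial configurations is therefore a subset of the set of executions starting from complete initial configurations, under the same scheduler constraints (the scheduler's admissible moves do not depend on which subclass of configuration we started in). Third, since a completely self-stabilizing algorithm guarantees that \emph{every} execution from \emph{every} complete initial configuration eventually reaches and stays at a single location, this guarantee holds a fortiori for the sub-collection of executions that happen to start from simple initial configurations. That is precisely the statement that the algorithm is simply self-stabilizing.

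There is essentially no obstacle here — the lemma is a containment statement between two universally quantified classes, and the proof is a one-line specialization argument. The only point requiring a modicum of care is the bookkeeping of the ``extra'' state components (target and pending color) when restricting to the \WAIT phase: one must note that whatever canonical values the verification model assigns to these components in the \WAIT phase, they are among the values quantified over in the definition of complete self-stabilization, so no generality is lost. I would state this explicitly to keep the argument airtight, and then conclude.
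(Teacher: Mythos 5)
Your proposal is correct and uses exactly the same argument as the paper: simple initial configurations form a subset of complete initial configurations, so the universal guarantee of complete self-stabilization specializes to simple self-stabilization. The extra care you take with the canonical values of pending move and pending color in the \WAIT phase is a reasonable refinement but does not change the substance of the one-line containment argument.
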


To prove the reverse, we want to prove that every \emph{complete} execution is simple-reachable. If this is the case, then any \emph{complete} execution is eventually explored by starting from a \emph{simple} initial configuration.

\begin{Lemma}
\label{Lem:SimToComSSYNC}
Under the \FSYNC, \CENTRALIZED and \SSYNC schedulers, any simply self-stabilizing algorithm is also completely self-stabilizing.
\end{Lemma}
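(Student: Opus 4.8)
The plan is to show that, for a simply self-stabilizing algorithm $A$, \emph{every} complete execution $\mathcal{E}$ under an \SSYNC scheduler (which subsumes \FSYNC and \CENTRALIZED) is simple-reachable. Granting this, the lemma follows immediately: the simple execution that shares a suffix with $\mathcal{E}$ solves rendezvous (both robots become co-located and stay so), and since the rendezvous property is a property of the suffix only, $\mathcal{E}$ solves it too; as $\mathcal{E}$ is arbitrary, $A$ is completely self-stabilizing. (The converse direction is Lemma~\ref{Lem:ComToSim}, already stated.)

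The first and key step is to exploit the defining feature of \SSYNC that makes the argument go through: an activation is an \emph{atomic} \LOOK-\COMPUTE-\MOVE cycle, so it both begins and ends with the activated robot in the \WAIT phase. Hence the only way a robot can be caught mid-cycle — in a \LOOK, \COMPUTE, or \MOVE phase, or carrying a stale pending target or pending color — is as an artifact of the (complete) initial configuration. After a robot's first activation it is back in \WAIT, and it is in \WAIT at every round boundary thereafter, since any later activation is again a full atomic cycle. In particular, the stale snapshot carried by a robot that starts in \COMPUTE or \MOVE influences at most one cycle — the first one it performs — and is flushed afterwards.

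Next I would invoke fairness: let $k$ be the least round index by which both robots have been activated at least once ($k=1$ in \FSYNC; $k$ bounded by the scheduler's fairness in \CENTRALIZED; finite in general). By the atomicity observation, in configuration $C_k$ both robots are in the \WAIT phase, so $C_k$ is a \emph{simple} configuration, with some (arbitrary) position and color for each robot. Then $C_k, C_{k+1}, C_{k+2}, \dots$ is a legitimate execution from the simple initial configuration $C_k$ under the tail of the same scheduler, i.e. a \emph{simple} execution, and it is by construction a suffix of $\mathcal{E}$; hence $\mathcal{E}$ is simple-reachable. Since $A$ is simply self-stabilizing, from some round $m \ge k$ onward both robots are co-located and remain so in this simple execution, and therefore also in $\mathcal{E}$.

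The main obstacle will be the bookkeeping behind the atomicity observation: one must verify that a single atomic activation really does return \emph{every} robot to a clean \WAIT state, whatever its initial partial state (\LOOK, \COMPUTE or \MOVE phase, plus a possibly-meaningless target or pending-color field), including under non-rigid motion, where the \MOVE phase may end before the target is reached but still ends with the robot in \WAIT. This is precisely the point at which the argument breaks for \ASYNC: there \LOOK, \COMPUTE and \MOVE are not atomic and a move can be interrupted and resumed, so staleness can persist indefinitely and no bounded number of activations forces a simple configuration — which is why the \ASYNC case is handled separately in Theorem~\ref{Thm:SoNice}.
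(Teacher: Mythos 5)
Your proposal is correct and follows essentially the same route as the paper: the paper's (one-sentence) proof likewise observes that under \FSYNC, \CENTRALIZED and \SSYNC every activation is an atomic \LOOK-\COMPUTE-\MOVE cycle, so once each robot finishes its current cycle the configuration is \emph{simple}, and the remaining suffix is a simple execution handled by simple self-stabilization. Your version merely spells out the fairness bookkeeping and the flushing of stale pending state that the paper leaves implicit.
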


Intuitively, it seems that this also holds for the case of \ASYNC algorithms. Since both robots are oblivious, with the exception of color, it seems logical that the system eventually ``forgets'' its initial configuration, and becomes reachable from a \emph{simple} initial configuration.

Surprisingly, we show that it is possible, for a well-chosen algorithm, \ASYNC scheduling, and \emph{complete} initial configuration to create an infinite execution that never becomes \emph{simple-reachable}.

\begin{Theorem}
\label{Thm:ObliviousMemory}
A simply self-stabilizing algorithm is not necessarily completely self-stabilizing in the \ASYNC model.
\end{Theorem}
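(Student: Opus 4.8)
The goal is to exhibit a single algorithm that is simply self-stabilizing (solves rendezvous from any initial position and color with both robots waiting) but which, for some \emph{complete} initial configuration (where a robot starts mid-cycle, with a pending move or pending color already committed), admits an infinite non-rendezvous execution under \ASYNC. The plan is to build such an example by hand and then argue both properties separately.

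First I would construct the algorithm. I would pick a known simply self-stabilizing two-color rendezvous algorithm for \ASYNC from the literature (e.g., a variant of Viglietta's or Heriban\etal's), so that the ``simply self-stabilizing'' half of the statement comes essentially for free by citation --- or, to be self-contained, I would present a small explicit color-based protocol and sketch why every \emph{simple} execution reaches and stays at rendezvous. The crucial design requirement is that, for \emph{simple} initial configurations, the combined effect of colors plus the look-compute-move discipline forces convergence, but that the color dynamics have a ``rotating'' or cyclic flavor that an adversary can keep spinning forever if it is allowed to start one robot already in the middle of a \MOVE (so that the robot is carrying a stale target that is inconsistent with the colors currently visible).

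Next, for the negative half, I would fix a \emph{complete} initial configuration --- both robots at distance $d>0$, with asymmetric colors and one robot already committed to a pending move --- and construct an explicit infinite \ASYNC schedule. The idea is the standard adversarial pincer used in impossibility arguments for robots with lights: let one robot finish a stale move while the other looks at the ``wrong'' color, so the two robots' views are perpetually out of phase; after each pair of cycles the configuration returns (up to relabeling / scaling) to a configuration of the same shape, with the pending move regenerated, so the execution is periodic in shape and never gathers. I would then verify that this schedule is \ASYNC-legal (Look, Compute, Move phases interleave as allowed; moves respect the $\delta$ lower bound; the scheduler is fair, activating each robot infinitely often) and that no configuration along it is gathered.

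The last step, and the main obstacle, is proving that this infinite execution is never \emph{simple-reachable}: it must have \emph{no} common suffix with any execution started from a \emph{simple} configuration. This is where the ``emerging memory'' phenomenon advertised in the introduction does the work --- I would argue that in every configuration along the constructed execution, at least one robot is perpetually in a non-\WAIT phase (it is always either mid-\MOVE or carrying a pending color that contradicts its current snapshot), a condition that no suffix of a \emph{simple} execution can satisfy, because the contradiction between the pending data and the snapshot can only have been created by starting in a \emph{complete} configuration and is then preserved inductively by the schedule. Making that invariant precise --- identifying exactly which local predicate (pending-move-inconsistent-with-colors, say) is (i) true initially, (ii) preserved by the adversarial schedule, and (iii) violated by every reachable configuration of every simple execution --- is the delicate part, and is also where one must be careful that the chosen base algorithm genuinely supports it; the rest is bookkeeping on a finite periodic orbit.
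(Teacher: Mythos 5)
Your overall shape is right---a purpose-built algorithm with cyclic color dynamics, a \emph{complete} initial configuration carrying a pending color/move, an infinite periodic \ASYNC schedule, and an invariant showing the execution is never \emph{simple-reachable}---and this is essentially the strategy of the paper's ``luminous robots'' proof. But there are two genuine problems. First, your one concrete instantiation is doomed: you propose to take Viglietta's or Heriban's algorithm so that simple self-stabilization ``comes for free,'' yet those are exactly the algorithms the paper proves (Theorem~\ref{Thm:SoNice}) to satisfy the identical color condition and hence to be \emph{completely} self-stabilizing; no adversarial schedule of the kind you describe exists for them. The counter-example must be a contrived protocol whose color transitions form a cycle that can only be entered via a pending color inconsistent with the visible one---the paper uses a three-color ``rotating'' rule ($1\to2$ on seeing $2$, $2\to3$ on seeing $3$, $3\to1$ on seeing $1$) started from $A$ in color~$1$ waiting and $B$ in color~$2$ with pending color~$3$. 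Second, the part you flag as ``delicate'' (the invariant certifying non-simple-reachability) is actually where all the content lies, and you leave it entirely open. In the paper it is discharged by a short exhaustive argument: from any \WAIT/\WAIT start the color pair collapses to an identical pair $\{c,c\}$ after which no robot can ever change color again, so no simple execution contains an infinite cycle of color changes, whereas the constructed complete execution changes colors forever.

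You also miss the paper's simplest witness, which uses pending \emph{moves} rather than pending colors: run \OTHER from a complete configuration in which one robot already has a committed target $P_1$ off the segment $[AB]$ (e.g., $ABP_1$ equilateral). The schedule reproduces the same triangle forever, and non-simple-reachability is immediate because any execution started from \WAIT/\WAIT only ever computes targets on the segment between the two robots. That argument requires no invariant machinery at all, and is worth knowing as the cleanest demonstration that \ASYNC scheduling can make an oblivious system ``remember'' its initial configuration indefinitely.
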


For a well-chosen algorithm, scheduler, and \emph{complete} initial configuration, it is possible for the system to have an emerging property of memory. This is because it is possible to have the current configuration depend on the initial configuration indefinitely (see appendix for details).

In practice, most rendezvous algorithms in the literature prevent this behavior by forcing robot~$A$ to wait for robot~$B$ without being able to change its color or move. As the scheduler relies on its ability to feed outdated information infinitely often to the robot performing the \LOOK to create non-oblivious executions, when synchronization is enforced by the algorithm, the scheduler looses this capability.

We first observe that in the case of oblivious deterministic rendezvous algorithms (that is, algorithms that make use of a single color), simple self-stabilization implies complete self-stabilization.

\begin{Theorem}
\label{Thm:targetSS}
If a simply self-stabilizing algorithm achieves rendezvous, it is also a simply self-stabilizing algorithm for rendezvous \emph{with arbitrary initial targets}. 
\end{Theorem}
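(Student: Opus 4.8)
The plan is to run a coupling (indistinguishability) argument between the execution we are given and an execution started from a genuine \emph{simple} configuration, and then invoke simple self-stabilization. The structural fact that makes this work is the following: in each of the six synchrony models considered (\CENTRALIZED, \FSYNC, \SSYNC, \ASYNC, \ASYNCLC, \ASYNCMOVE), a robot that is currently in the \WAIT phase must pass through a \LOOK phase and then a \COMPUTE phase before it can execute any \MOVE phase, and each \COMPUTE phase overwrites the robot's pending target with the destination dictated by the snapshot taken in the immediately preceding \LOOK. Consequently, starting from a configuration in which both robots are in the \WAIT phase, neither robot ever \emph{uses} the pending target it held initially: that value is dead by the time the robot first reaches its \COMPUTE phase, which is strictly before its first \MOVE.

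Next, let $\mathcal{A}$ be a simply self-stabilizing rendezvous algorithm, take an arbitrary execution $E$ of $\mathcal{A}$ from an initial configuration $C_0$ in which both robots are in \WAIT with arbitrary positions, colors, and pending targets $t_A,t_B$, and let $C_0'$ be $C_0$ with $t_A,t_B$ replaced by the trivial ``no pending move'' value, so that $C_0'$ is a simple initial configuration. Copying the scheduler's decisions of $E$ one step at a time (which robots to activate, and, in the non-rigid case, the stopping point of each \MOVE), I would prove by induction on the step index that $E$ and the induced execution $E'$ from $C_0'$ go through the same sequence of positions, colors, and phases and agree on the pending target of every robot that has already completed a \COMPUTE; they can disagree only on the pending target of a robot that has not yet computed, and, by the structural fact above, no \MOVE is executed while that disagreement is present. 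Since snapshots, the set of legal scheduler moves, and color updates are all functions of positions, colors, and phases only, every adversarial choice legal in $E$ is legal in $E'$, so $E'$ is a genuine execution of $\mathcal{A}$ from a simple initial configuration.

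Finally, because $\mathcal{A}$ is simply self-stabilizing, $E'$ eventually reaches a gathered configuration and stays there; since $E$ visits exactly the same positions, so does $E$. As $C_0$ and $E$ were arbitrary, $\mathcal{A}$ solves rendezvous from every such configuration, i.e.\ it is simply self-stabilizing for rendezvous with arbitrary initial targets. The one delicate point — the main obstacle — is to pin down the cycle semantics of the model carefully enough to be certain that, even in the asynchronous variants where a stale target might be feared to linger, no \MOVE can ever be scheduled before the next \COMPUTE has refreshed the target; once that is established, the coupling itself is routine.
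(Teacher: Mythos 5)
There is a genuine gap: you have proved a different, essentially vacuous statement. Your whole argument rests on reading ``arbitrary initial targets'' as ``both robots start in the \WAIT phase but carry an arbitrary pending-move value,'' in which case the target is indeed dead data, overwritten at the next \COMPUTE before any \MOVE can use it. That reading trivializes the theorem and is not the one the paper intends. This theorem is the ``target'' half of the programme of upgrading simple to complete self-stabilization (the ``color'' half being Theorem~\ref{Thm:SoNice}), so the initial targets in question are \emph{live}: a robot may start mid-cycle, e.g.\ already past its \COMPUTE or inside its \MOVE phase, about to travel to an arbitrary point that need not lie on the segment joining the two robots --- exactly the situation of the equilateral-triangle counterexample used to prove Theorem~\ref{Thm:ObliviousMemory}. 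In that setting your key structural fact (``no \MOVE can ever be scheduled before the next \COMPUTE has refreshed the target'') is simply false, the very first step of your coupling already sends the two executions to different positions (indeed off the line joining the robots), and the induction collapses; the simple execution $E'$ you want to compare against cannot even mimic the first scheduled event of $E$.

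The paper's actual argument is of a different nature and does not couple executions step by step. For single-color algorithms it invokes the impossibility of oblivious rendezvous in \SSYNC and \ASYNC, so that only the schedulers already covered by Lemma~\ref{Lem:SimToComSSYNC} remain. For luminous algorithms it uses Lemma~\ref{Lem:RDVcond} (rendezvous forces the two robots' targets to be eventually always identical) to argue that an execution in which the arbitrary initial target matters forever is one in which a robot perpetually moves toward a target computed from an outdated position of the other robot, a behavior that would already defeat the algorithm on simple initial configurations; hence a simply self-stabilizing algorithm cannot exhibit it, and the influence of the initial target dies out. If you want to salvage a coupling-style proof, it must tolerate an initial divergence of positions and show that the divergence is eventually absorbed --- which is precisely the non-trivial content your proposal bypasses.
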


Because no deterministic solution exists in this setting, we study the case of \ASYNC luminous rendezvous algorithms using at least two colors.

To tackle the case of color memory, we consider two luminous \ASYNC rendezvous algorithms:
Viglietta 3~\cite{Viglietta13} colors and Heriban 2 colors~\cite{HeribanDT18}. For both algorithms, we find a structural condition we prove is sufficient to block the scheduler from creating any color memory execution.

\begin{Theorem}[Identical Color Condition]
\label{Thm:SoNice}
We define the \emph{identical color condition} (ICC) as: "For any pair of robots $A$ and $B$ whose colors are $Color_A$ and $Color_B$, respectively, $A$ can decide on a new color (different from $Color_A$) if and only if its snapshot shows $Color_A$ and $Color_B$ are identical."
For any luminous rendezvous algorithm, if the algorithm satisfies ICC and is simply self-stabilizing, then is it also completely self-stabilizing. \end{Theorem}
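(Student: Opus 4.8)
The plan is to prove that, under any ICC algorithm in \ASYNC, every \emph{complete} execution is \emph{simple-reachable}; together with simple self-stabilization this yields complete self-stabilization exactly as Lemma~\ref{Lem:SimToComSSYNC} does for \FSYNC, \CENTRALIZED and \SSYNC. I would base everything on one reduction: a complete execution $E$ shares a suffix with some simple execution if and only if some configuration occurring in $E$ is reachable by a valid \ASYNC execution from a \emph{simple} initial configuration (both robots in \WAIT, arbitrary positions and colours, no pending data). Indeed, if a configuration $C$ occurs at step $t$ of $E$ and is reached by a simple execution prefix $P$, then $P$ followed by the tail $E_{\ge t}$ is a valid simple execution sharing the suffix $E_{\ge t}$ with $E$; conversely, the first configuration of a shared suffix is reached by the simple execution's own prefix. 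So it suffices to exhibit, in every complete \ASYNC execution of an ICC simply-self-stabilizing algorithm, one configuration reachable from a simple initial configuration.

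First I would flush the pending data of the \emph{complete} start. By fairness each robot is activated infinitely often, so after a bounded prefix each has finished its initial partial cycle and thereafter performs only full \LOOK-\COMPUTE-\MOVE cycles; from that point every target is the algorithm applied to a snapshot genuinely taken during $E$, and every pending colour is the algorithm's output on such a snapshot. Theorem~\ref{Thm:targetSS} already shows arbitrary initial targets are harmless, and the same flushing argument shows the arbitrary initial pending colours are adopted and then replaced by consistent ones; what remains is a configuration with \emph{algorithm-consistent} targets and pending colours, but still arbitrary phases, positions and colours.

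The crux is a colour invariant built from ICC. A robot changes colour at the end of a \COMPUTE only when, by ICC, the snapshot it is holding reports the two colours equal; call a robot \emph{armed} while it lies between its \LOOK and its colour update holding such a snapshot. An armed robot is disarmed as soon as it completes its cycle, and it can re-arm only by taking a \emph{new} snapshot at a real instant when the colours coincide, since a stale equal-colour snapshot is consumed within boundedly many of its owner's own steps. Tracking the pair (colour-agreement bit, set of armed robots) and arguing by cases over \ASYNC interleavings, I expect to show this state cannot sustain indefinite dependence on the initial configuration: either the colours reach and stay in an equal state, and the tail of $E$ then coincides with an \ASYNC execution from a simple configuration carrying those equal colours; or the colours reach a fixed unequal pair $(c_A,c_B)$ from which ICC forbids every further colour change, so the tail of $E$ is pure movement, and applying simple self-stabilization to the simple configuration with colours $(c_A,c_B)$ under a colour-frozen schedule (e.g.\ \CENTRALIZED, which by ICC never changes colours either) shows movement alone already achieves rendezvous in that regime. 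In both cases a configuration of $E$ reachable from a simple initial configuration appears, so $E$ is simple-reachable, and simple self-stabilization finishes the proof.

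The hard part will be this colour invariant, in particular excluding the ``armed robot propagates the disagreement'' loop: one robot, acting on a stale equal-colour snapshot, breaks colour equality just as the other arms itself on its own stale equal-colour snapshot, and the two hand the perturbation back and forth forever. Ruling this out needs the precise accounting that ICC ties \emph{arming} to \emph{actual} rather than reported colour equality, that fairness forces every held snapshot to be consumed after boundedly many of its owner's steps, and that once both robots simultaneously hold only unequal-colour snapshots the colour pair is frozen --- which is precisely the synchronization the literature's algorithms enforce, and precisely what denies the \ASYNC scheduler the stale-information leverage exploited in Theorem~\ref{Thm:ObliviousMemory}.
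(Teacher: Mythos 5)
Your overall framing --- reduce complete self-stabilization to showing that every complete execution is simple-reachable, dispose of stale targets via fairness and Theorem~\ref{Thm:targetSS}, and then concentrate on colors --- is exactly the strategy the paper follows, and your identification of the dangerous scenario (two robots handing a color perturbation back and forth forever, each acting on a stale equal-color snapshot) is precisely the obstruction that must be ruled out. The problem is that you never rule it out: the entire content of the theorem sits inside the sentence ``arguing by cases over \ASYNC interleavings, I expect to show this state cannot sustain indefinite dependence on the initial configuration,'' and your closing paragraph concedes that ``the hard part will be this colour invariant.'' The dichotomy you then assert (colors either stabilize equal or freeze at a fixed unequal pair) is the \emph{conclusion} of that missing argument, not a consequence of anything you established. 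The paper closes this gap with an explicit, finite case analysis: starting from the minimal non-simple-reachable situation (robot $A$ with visible color $A0$ and a distinct pending color $A1$, robot $B$ with consistent color $B0$), it shows (i) activating $A$ first immediately yields a configuration reachable from a simple one, so the adversary must activate $B$ first; (ii) by ICC, $B$ can adopt a new color only if $A0=B0$, otherwise the configuration is unchanged and case (i) applies; and (iii) in the two remaining schedules (activate $B$ then $A$, or $B$ twice), ICC forces either $A1=A0$, contradicting the hypothesis, or a color assignment already shown to yield a simple-reachable configuration. Your proposal needs this (or an equivalent) exhaustive interleaving analysis to count as a proof; without it, the ``armed robot propagates the disagreement'' loop you name is not excluded.

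A secondary soft spot: in your ``colors freeze unequal'' branch you switch from proving simple-reachability to directly arguing that the tail of $E$ gathers, by invoking simple self-stabilization ``under a colour-frozen schedule.'' Simple self-stabilization guarantees rendezvous from a \emph{simple} initial configuration with colors $(c_A,c_B)$, whereas the tail of $E$ starts from arbitrary phases and pending moves; you would still need to exhibit a simple execution sharing a suffix with that tail (again via Theorem~\ref{Thm:targetSS} or a flushing argument). As written, the two halves of your dichotomy are discharged by different and not obviously compatible arguments.
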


In this paper, the model checker only uses \emph{simple} initial configurations. From Theorem~\ref{Thm:SoNice}, we can extend the positive results to \emph{complete} initial configurations when considering algorithms that satisfy ICC such as Viglietta 3-colors~\cite{Viglietta13} and Heriban 2 colors~\cite{HeribanDT18}. Of course, negative results are not impacted, as a counter-example in the simply self-stabilizing context is also a counter-example in the completely self-stabilizing context.

\subsection{Self-Stabilization and rigidity}

The non-rigid assumption is another source of a continuous variable in the model: when the robot target a point at some distance $d\geq\delta>0$, the scheduler may stop the robot anywhere between $\delta$ and $d$. In this section, we explore under which circumstances we can restrict the verification model to rigid moves only without loosing generality, and show that completely self-stabilizing rendezvous algorithms satisfy the condition (Theorem~\ref{Thm:rigid}).  

Using criteria such as (complete) self-stabilization and rigidity, we can define four different settings for rendezvous algorithms according to the combination of $\{\textit{rigid}, \textit{non-rigid\}}$  and $\{\textit{self-stabilizing}, \textit{non}$ $\textit{self-sta\-bi\-li\-zing\}}$.
Studying the literature on rendezvous algorithms, we were not able to find examples of self-stabilizing algorithms requiring rigid moves that failed with non-rigid moves. The following theorem shows that such algorithms cannot exist.

\begin{Theorem}
\label{Thm:rigid}
Any completely self-stabilizing algorithm that achieves rendezvous assuming rigid moves also achieves it assuming non-rigid moves.
\end{Theorem}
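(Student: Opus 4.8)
The plan is to fix an arbitrary complete initial configuration $C_0$ and an arbitrary non-rigid execution $E=C_0,C_1,C_2,\dots$ of the algorithm, and to show that $E$ reaches a gathered configuration and remains gathered, which is exactly what a completely self-stabilizing rendezvous algorithm has to guarantee. I would rely on two features of the setting. First, a configuration of two robots is completely determined by the inter-robot distance $d$ together with the two colors, the two phases, and the pending colors and targets. Second, since the algorithm is a rigid-motion algorithm it is distance-independent: every computed color, and every computed destination written as a point of the segment joining the two robots (i.e.\ as a fraction of $d$), depends only on the colors and phases appearing in the snapshot, never on the absolute value of $d$.

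The core of the argument is to build, alongside $E$, a \emph{rigid} execution $E'=C_0,D_1,D_2,\dots$ that mirrors $E$: $E'$ starts from $C_0$, the robots take the correspondingly scaled snapshots and therefore compute the same colors and the same destination fractions as in $E$, the same activation events occur in the same order, but every move is carried through to its computed destination (in \ASYNC, whenever a move is cut short in $E$ the scheduler of $E'$ simply schedules the remainder of that move before the next event, with no other robot acting in between). Using distance-independence, as long as $E$ and $E'$ still agree on the predicate ``the two robots coincide'', every elementary step preserves the invariant: $C_i$ and $D_i$ have the same colors, phases and pending data, and $\mathrm{dist}(C_i)=\rho_i\cdot\mathrm{dist}(D_i)$ for some $\rho_i>0$; the factor $\rho_i$ is left unchanged by a \LOOK, a \COMPUTE, or a move that completes, and is multiplied by a positive and finite number whenever a move of $E$ is cut short (finiteness of the color set bounds the destination fractions, which keeps this multiplier finite, including the overshoot and move-away cases). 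Since the algorithm achieves rendezvous under rigid moves and is completely self-stabilizing, $E'$ reaches a step $T$ with $\mathrm{dist}(D_j)=0$ for all $j\geq T$; past $T$ no robot of $E'$ moves a positive distance, so $E$ and $E'$ cannot desynchronize after $T$. Consequently, if $E$ and $E'$ remain synchronized up to step $T$, then $\mathrm{dist}(C_j)=\rho_j\cdot 0=0$ for all $j\geq T$ and $E$ is gathered forever.

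The one way synchronization can be lost is a \emph{gathering} move: a step at which $E'$ reduces the distance to exactly $0$ but the matching move of $E$ is stopped early, so that $\mathrm{dist}(C_{i+1})>0$ while $\mathrm{dist}(D_{i+1})=0$ and the factor $\rho_{i+1}$ is undefined. I would deal with this by noting that at such an event the robot traveled at least $\delta$ along a segment of length $\mathrm{dist}(C_i)$, so the inter-robot distance of $E$ strictly decreased by at least $\delta$, and that $C_{i+1}$ is again a legitimate complete configuration; complete self-stabilization under rigid moves then supplies a fresh rigid reference execution from $C_{i+1}$, and the mirroring is restarted. It then remains to show that only finitely many cut-short gathering moves occur along $E$; once they are exhausted, the mirroring runs all the way to the time $T$ of the last reference execution and $E$ is gathered from then on. Since $C_0$ and $E$ were arbitrary, the algorithm achieves rendezvous under non-rigid moves.

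I expect the finiteness of the cut-short gathering moves to be the main obstacle. Each such event destroys at least $\delta$ of inter-robot distance, but nothing prevents the algorithm from enlarging the distance again between two events, so this cannot be a plain monotonicity argument; it has to combine the $\delta$ lower bound on move lengths, the finiteness of the color set (which bounds how quickly any rigid execution, hence---through the scaling relation---the shadow execution $E$, can change the distance), and complete self-stabilization (which guarantees that the configuration produced by each cut-short move is still one from which the rigid algorithm works). This is presumably the part that requires the careful case analysis deferred to the appendix; the remainder, in particular the ``stays gathered'' property, follows immediately from the scaling relation once $E$ has been synchronized with a rigid execution that gathers.
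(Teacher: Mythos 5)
Your simulation argument has a real gap at exactly the point you flag: the claim that only finitely many cut-short gathering moves occur, and more generally that the restart process terminates, is never established. It is not a routine omission. A ``gathering move'' in the rigid reference execution is a \OTHER move, and when the corresponding move of $E$ is interrupted you assert that the inter-robot distance of $E$ drops by at least $\delta$; but under the movement-resolution semantics of this model the target of a \OTHER move may be an \emph{outdated} position of the other robot (a \MISS), and the other robot may itself be mid-move, so ``travelled $\geq\delta$ toward the observed position'' does not immediately translate into ``current distance decreased by $\geq\delta$.'' Moreover, in a \emph{complete} initial configuration the pending targets are arbitrary points of the plane, so the distance is not even monotone at the start of $E$, which undercuts the potential-function argument you gesture at. Until this is closed, the proof does not go through.

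The paper's own proof avoids all of this machinery with one observation that your proposal never uses: the scheduler cannot interrupt a move of length at most $\delta$, so \emph{once the inter-robot distance drops below $\delta$, every subsequent move is de facto rigid}, and the tail of the non-rigid execution is literally a rigid execution started from whatever complete configuration the system is then in --- which gathers, by \emph{complete} self-stabilization (this is precisely where ``complete'' rather than ``simple'' is needed, since that configuration has arbitrary phases and pending data). That the distance does eventually drop below $\delta$ follows from Lemma~\ref{Lem:Conv}: in a fair execution the distance either remains constant or converges to $0$, and the constant-distance (``static'') executions are ruled out because they would equally defeat the algorithm under rigid moves, contradicting the hypothesis. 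If you want to salvage your shadow-execution construction, the same observation is also the cleanest way to close your gap: below distance $\delta$ no move can be cut short, so desynchronization events can only occur while the distance exceeds $\delta$, and your restart bookkeeping is needed only finitely long. As written, though, the proposal is incomplete.
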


Theorem~\ref{Thm:rigid} implies that in order to prove complete self-stabilization, it is only necessary to prove the property assuming rigid moves. 

\subsection{Proving rendezvous algorithms}

The last mile to the verification model is to show that the remaining continuous variable of the current configuration (the distance between the two robots) can be abstracted into two states only (Theorem~\ref{Thm:BinSpace})

We first show that, when robots may only move in straight lines, rendezvous algorithms need only use three kinds of movements: stay put, move to the midpoint, or move to the other robot.

\begin{Lemma}
\label{Lem:Mov}
The three types of motion stay put (\STAY), move to the midpoint (\HALF), and move to the other robot (\OTHER) are both necessary and sufficient to achieve rendezvous in \SSYNC and \ASYNC.
\end{Lemma}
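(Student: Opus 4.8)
The plan is to split the statement into its two halves. For \emph{sufficiency}, I would recall that rendezvous in \SSYNC is already solvable with two colors (Viglietta~\cite{Viglietta13}) and in \ASYNC with two colors (Heriban\etal~\cite{HeribanDT18}); the point here is merely to observe that the published algorithms that witness these results only ever instruct a robot to stay, to go to the midpoint of the two robots, or to go to the position of the other robot. So sufficiency follows by exhibiting one concrete algorithm per model (or by citing those algorithms) and checking that every branch of its \COMPUTE function returns one of these three destinations. This is essentially a verification-by-inspection of known protocols, not a new construction.

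For \emph{necessity} the argument is the interesting direction, and it rests on the fact that with two robots and no common coordinate system the only geometric data available in a snapshot is the (scaled) position of the other robot on the line through the two robots. First I would invoke the discussion already in the paper that, with two robots, every configuration is symmetric and "conveys no information other than distance" (Section on the rendezvous problem), together with Lemma~\ref{Lem:Mov}'s own hypothesis that robots move in straight lines: hence a robot's destination must lie on the line joining the two robots, and by scale-invariance (robots have no unit distance) the destination is determined by a single ratio $t \in \mathbb{R}$, namely "move to the point at fraction $t$ of the way from me to the other robot." The values $t=0$, $t=1/2$, $t=1$ are exactly \STAY, \HALF, \OTHER. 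The task is then to rule out every other value of $t$. I would argue that any $t \notin \{0, 1/2, 1\}$ can be defeated: if $t \ne 0$ and $t \ne 1$ but $t$ is used in a configuration where the adversary activates a robot whose color/state is symmetric to the other's, a \CENTRALIZED (hence \SSYNC and \ASYNC) scheduler can keep the distance strictly positive forever — moving the activated robot by a fixed fraction $t$ of the current gap leaves a gap of $|1-t|$ (or $t$) times the old one, never zero; and if both are activated \FSYNC-style in a symmetric state with $t \ne 1/2$, they overshoot symmetrically and the distance is multiplied by $|1-2t| \ne 0$, again never reaching zero. Thus a non-gathered symmetric configuration with the "wrong" $t$ is a fixed point of the dynamics up to scaling and rendezvous fails. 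Any algorithm therefore must, on the symmetric configurations it can be trapped in, use only $t \in \{0,1/2,1\}$; since colors can only break finitely many symmetric cases and the adversary controls which color-symmetric configuration to iterate, the only ratios that can ever appear in a correct algorithm are these three.

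The main obstacle is making the "the adversary can stay in symmetric configurations" step airtight in the presence of lights: with $k$ colors a robot's destination may depend on the color pair, so I must show the scheduler can always force the execution into (and keep it inside) a color-configuration from which the algorithm's prescribed ratio is one of the bad values, or else the algorithm already only uses good values. I would handle this by a case analysis over the finite set of color pairs: either for every reachable color pair in a symmetric configuration the algorithm outputs $t \in \{0, 1/2, 1\}$ — in which case we are done — or there is a color pair for which it outputs a bad $t$, and then, using fairness only requires infinitely-often activation (not eventual progress), the scheduler repeatedly reinstates that symmetric color configuration (e.g., by activating robots in lockstep so colors update identically) and scales down the gap by a constant factor bounded away from $1$, contradicting rendezvous. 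A secondary subtlety, worth a remark, is that in \ASYNC the scheduler additionally gets to separate \LOOK from \MOVE, which only strengthens the adversary, so the \SSYNC impossibility argument transfers verbatim; conversely the sufficiency direction must be checked in the \emph{weaker}-for-the-designer \ASYNC model, which is why we appeal to the Heriban\etal\ two-color \ASYNC algorithm rather than a purely \SSYNC one.
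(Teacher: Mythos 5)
Your sufficiency argument coincides with the paper's: both simply observe that the known algorithms of Viglietta~\cite{Viglietta13} and Heriban\etal~\cite{HeribanDT18} achieve rendezvous while only ever issuing \STAY, \HALF, or \OTHER, so sufficiency is verification-by-inspection.

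The necessity direction is where you diverge, and there is a genuine gap. The paper reads ``necessary'' as ``none of the three motions can be dropped'' and proves it in two steps: (i) in a symmetric, same-color configuration, Viglietta's Proposition~4.1 shows that an algorithm that never computes the midpoint never gathers under rigid \FSYNC, so \HALF is necessary; (ii) in an asymmetric, different-color configuration, simultaneous activation forces the two targets to coincide, which the paper parametrizes as $D/x$ for one robot and $(x-1)D/x$ for the other, and sequential activation of full cycles then fails whenever $x \neq 1$ --- so the only workable asymmetric pair is $x=1$, i.e.\ the pair (\OTHER, \STAY), making both of those necessary. You instead set out to prove an exclusivity statement: that no ratio $t \notin \{0, 1/2, 1\}$ can ever appear in a correct algorithm. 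That is a logically different claim, and two things go wrong with it. First, your key step --- ``the scheduler repeatedly reinstates that symmetric color configuration'' --- is not something the adversary can do: color transitions are dictated by the algorithm, not by the scheduler, so a ``bad'' color pair may be visited only transiently (or never), and a single application of a bad ratio merely scales the gap once, which does not refute rendezvous; indeed one can pad a correct algorithm with a harmless $t=1/3$ rule used only in an unreachable or transient configuration, so exclusivity is false as stated. Second, even if exclusivity held, it would not establish that \OTHER and \STAY must each actually be used; your argument never produces the coincidence-of-targets constraint in the asymmetric case, which is precisely what pins down the pair (\OTHER, \STAY). Your symmetric-case computation (distance multiplied by $|1-2t|$ under lockstep activation) is essentially the paper's appeal to Viglietta's Proposition~4.1 and does recover the necessity of \HALF, but the necessity of the other two motions is missing.
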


\begin{Lemma}
\label{Lem:Conv}
When two robots $A$ and $B$ start from the \WAIT state, in an infinite fair execution, either the distance between $A$ and $B$ eventually remains constant, or it converges toward $0$.
\end{Lemma}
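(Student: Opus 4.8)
The plan is to prove that the distance is non-increasing along the execution, deduce that it converges to some $d^\star\ge 0$, and then treat the two cases $d^\star=0$ and $d^\star>0$ separately. First I would reduce to a one-dimensional problem: by Lemma~\ref{Lem:Mov} we may assume the algorithm only uses \STAY, \HALF, and \OTHER, and each of these three motions sends the active robot to a point lying on the segment joining the two robot positions recorded in its latest snapshot. Hence, by induction on the \MOVE phases, if both robots lie on a common line at some instant they still do so afterwards; since two points always determine a line, we may work in $\mathbb{R}$. Write $I(t)=[\min(A(t),B(t)),\max(A(t),B(t))]$ for the interval spanned by the two robots, so that the distance is $|I(t)|$.

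The core claim is that the family $\{I(t)\}$ is nested, i.e. $I(t_2)\subseteq I(t_1)$ whenever $t_1\le t_2$; this immediately gives that the distance is non-increasing. I would prove this by induction over the atomic events of the execution (the \LOOK, \COMPUTE, \MOVE-begin and \MOVE-end steps, and likewise for the \ASYNCLC and \ASYNCMOVE variants), positions changing only during \MOVE phases. When a robot $R$ executes part of a \MOVE toward a target $p$ computed from its last \LOOK at time $s$, it moves monotonically from its current position toward $p$, and $p$ lies in $I(s)$. The delicate point — and the first real obstacle — is that the other robot may have moved since time $s$, so $p$ need not lie in the \emph{current} interval. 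I would handle this by strengthening the induction hypothesis to: at all times each robot's position lies in $I(s')$ for every earlier time $s'$. With this strengthening, if $R$ is (say) the left robot heading right toward $p\le B(s)$, the right endpoint cannot be pushed out (the other robot stays $\le B(s)$) and the left endpoint can only increase; the symmetric cases are identical. This closes the induction and yields monotonicity of the distance.

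Since the distance is non-increasing and bounded below by $0$, it converges to some $d^\star\ge 0$. If $d^\star=0$ we are in the ``converges toward $0$'' case and there is nothing more to do. Suppose $d^\star>0$; I would show the distance is eventually constant. Fix $T$ with $\mathrm{dist}(A(t),B(t))<2d^\star$ for all $t\ge T$; note that since $d^\star>0$ the robots are never co-located after $T$. After time $T$ no activation can be an ``effective'' \HALF step: sending the active robot from an endpoint of the current interval to its (strictly interior) midpoint would force the new distance below $d^\star$, a contradiction. An \OTHER step that reaches its target lands the active robot on the other robot's previous position; if the other robot has not moved this co-locates them (distance $0$, contradiction), so the other robot has moved, and one checks that the resulting configuration either has distance exactly $d^\star$ (the robots have effectively swapped, leaving the interval's length unchanged) or strictly smaller (again a contradiction once the distance is close enough to $d^\star$). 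A \STAY step never changes positions. Hence every activation after some time leaves the distance unchanged, i.e. it is eventually constant.

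The main obstacle is making the \OTHER analysis of the last paragraph fully rigorous in \ASYNC with non-rigid moves, where a robot may be stopped mid-move and may act on a stale snapshot: one must rule out an infinite sequence of activations each shaving an ever-smaller positive amount off the distance while the limit remains $d^\star>0$. The cleanest route is to carry out the argument assuming rigid moves — which is legitimate for the algorithms this abstraction targets, since rigid and non-rigid motion coincide for completely self-stabilizing algorithms by Theorem~\ref{Thm:rigid} — and to exploit that, under rigid moves, each effective \HALF move reduces the distance by a factor and each effective \OTHER move is an exact swap that pulls the spanning interval inward by precisely the amount the other robot has moved into it; summing these inward shifts against the finite budget $\mathrm{dist}(A(0),B(0))-d^\star$ bounds the number of strict decreases. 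A secondary, purely bookkeeping, obstacle is verifying the strengthened nested-interval invariant uniformly across the sub-phases of an \ASYNC cycle and the two atomic variants \ASYNCLC and \ASYNCMOVE.
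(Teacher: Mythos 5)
Your overall strategy (restrict to \STAY/\HALF/\OTHER via Lemma~\ref{Lem:Mov}, argue the distance cannot increase, then separate ``eventually constant'' from ``converges to $0$'') matches the paper's, but your central technical device fails. The nested-interval claim $I(t_2)\subseteq I(t_1)$ --- and the strengthened invariant that each position lies in $I(s')$ for \emph{every} earlier $s'$ --- is false in \ASYNC. Concretely: $A$ at $0$ and $B$ at $1$ both \LOOK and both compute \OTHER; $A$ completes its move first, so at that instant both robots sit at $1$ and $I(t_1)=\{1\}$; then $B$ carries out its pending move to $A$'s \emph{old} position $0$, giving $I(t_2)=[0,1]\not\subseteq\{1\}$ and a distance that has strictly increased from $0$ back to $1$. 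So the pointwise monotonicity on which you build the convergence to $d^\star$ does not hold. What is true (and what the paper implicitly uses) is a coarser statement: the positions always remain in the convex hull of the current positions and pending targets, and the distance \emph{sampled at full-cycle boundaries} never exceeds its previous sampled value; the paper's table explicitly lists the swap scenario ($A$ pending \OTHER while $B$ reaches $A$) as a ``static execution'' in which the sampled distance is merely restored, not decreased. Your induction needs to be restated at that granularity, and your endpoint argument (``the right endpoint cannot be pushed out'') must confront exactly the stale-\OTHER case above.

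The second gap is the circularity in your final paragraph. You propose to dispose of the non-rigid difficulties by assuming rigid moves, ``legitimate\dots by Theorem~\ref{Thm:rigid}.'' But in this paper the proof of Theorem~\ref{Thm:rigid} itself invokes Lemma~\ref{Lem:Conv} (it argues that a completely self-stabilizing rigid algorithm admits none of the static executions classified by this lemma, hence its non-rigid executions drive the distance below $\delta$). You therefore cannot use Theorem~\ref{Thm:rigid} here, and the lemma must be proved for non-rigid motion directly. The fix is not difficult --- the $\delta$ lower bound on effective movement is your friend, since any \HALF or non-swap \OTHER that strictly decreases the distance decreases it by at least $\min(\delta, d/2)$ or lands on a previously occupied point, which rules out infinitely many strict decreases accumulating above a positive limit --- but as written your proposal both relies on a false invariant and on a forward reference that the paper's own logical structure forbids.
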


We observe that, for any rendezvous algorithm execution that solely uses the three required movements, and where robots $A$ and $B$ start in the \WAIT phase, the entire execution happens on the line $(AB)$.

\begin{Theorem}[Rigid motion model]
\label{Thm:BinSpace}
When proving the correctness of a rigid-motion rendezvous algorithm that solely uses the three required movements, only using the two model states \emph{gathered} and \emph{not-gathered} is sufficient to properly represent the Euclidean plane.
\end{Theorem}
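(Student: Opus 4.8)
The plan is to strip the continuous structure off in three layers — the plane, then the line coordinate, then the exact distance — building on Lemma~\ref{Lem:Mov} and the observation that precedes the theorem. First I would fix the geometry: by Lemma~\ref{Lem:Mov} the algorithm issues only the moves \STAY, \HALF, \OTHER, and by the observation preceding the statement, when both robots start in \WAIT the whole execution takes place on the line $(AB)$. Since each target is $A$'s position, the midpoint, or $B$'s position, every pending target is an affine combination with coefficients in $\{0,\tfrac12,1\}$ of the two current positions, hence also on that line. So it suffices to carry a single real coordinate per robot, together with each robot's phase, color, pending color, and pending target.

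Second I would exploit the absence of a common coordinate system. For any similarity $\phi$ of the plane (translation, rotation, reflection, scaling), applying $\phi$ to an execution yields another valid execution and maps each robot's snapshot to the one it would obtain under the re-framed coordinates; since \STAY, \HALF, \OTHER are similarity-equivariant, the move a robot computes depends only on the similarity-invariant content of its snapshot, namely its own color, the other robot's color, and whether the two robots occupy the same point. On the line any two configurations of two distinct points are related by an affine map and any two coincident ones trivially are, so the single bit \emph{gathered}/\emph{not-gathered} records all the positional information the algorithm can react to.

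Third — the step carrying the real content — I would check that the induced transition relation on this bit is well defined and faithful, so that the binary model neither loses nor invents behaviours. Concretely I would enumerate, for a transition activating a subset of the robots, the combinations of moves actually executed (each robot performing \STAY, \HALF, or \OTHER, possibly a pending one) and verify the resulting value of the bit is forced: from \emph{not-gathered}, the robots gather exactly when their two destinations coincide, which by the affine-combination form depends only on the move types and on whether each destination was computed from an up-to-date view of the other robot; from \emph{gathered}, all three moves keep a robot at the common point, so the bit is stable unless a robot executes an outdated pending move computed while \emph{not-gathered}, which merely returns the system to \emph{not-gathered}. In no case does the outcome depend on the numerical distance — only on the bit, the colors, and the finite bookkeeping already attached to pending moves — which is precisely the sense in which a rigid-motion algorithm is distance-independent, and which is why Lemma~\ref{Lem:Conv} already suffices to certify progress.

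The main obstacle is this last step in the \ASYNC case: a pending \HALF or \OTHER computed while the robots were far may, once the other robot has moved in the interim, land a robot exactly on the other (forcing \emph{gathered}) or not, and the correct bit must be recoverable from the model's record of which view each pending move rests on and of how the configuration evolved since, rather than from coordinates. I would discharge this with an invariant, proved by induction along the execution, that the entire positional state is determined up to similarity by a bounded amount of combinatorial data: although the reachable coordinates are dyadic rationals of unbounded denominator, the only predicate the correctness property or the algorithm ever tests is equality among the at most four colinear points (the two positions and the two pending targets), each an affine combination with coefficients in $\{0,\tfrac12,1\}$ of two base points, and these fall into finitely many classes relevant to the \emph{gathered}/\emph{not-gathered} distinction.
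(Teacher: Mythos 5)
Your proposal is correct and follows essentially the same route as the paper's (much terser) proof: restrict to the line $(AB)$ using Lemma~\ref{Lem:Mov} and the preceding observation, note that the absence of a common unit of length means the algorithm can only perceive the \emph{gathered}/\emph{not-gathered} bit, and conclude that rigid motion makes the outcome independent of the actual distance. Your third layer additionally justifies the soundness of the induced transition relation on the bit (what the paper encodes separately in its movement-resolution table), so you prove somewhat more than the paper's three-sentence argument, but by the same underlying scale-invariance idea.
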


Thanks to Theorem~\ref{Thm:BinSpace}, we now have a finite number of states to model the entire Euclidean plane in the case of rigid rendezvous. Note that this holds for both self-stabilizing and non-self-stabilizing algorithms. In turn, this implies that we may use model checking to verify the validity of a rendezvous algorithm in the particular case of rigid motion. Furthermore, we can verify (simple) self-stabilization by checking all possible pairs of colors, and that this also verifies complete self-stabilization if the algorithm satisfies ICC (by Theorem~\ref{Thm:SoNice}) and non-rigid completely self-stabilizing algorithms (by Theorem~\ref{Thm:rigid}).

The only remaining family of algorithms is non-rigid, non-self-stabilizing algorithms.

\begin{Theorem}
\label{Thm:sht}
To prove non-rigid non-self-stabilizing algorithms to achieve rendezvous, verifying rigid behavior is necessary but not sufficient to prove the correctness of the algorithm.
\end{Theorem}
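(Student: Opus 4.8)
The plan is to split the claim into a \emph{necessity} part --- any algorithm correct under non-rigid moves is already correct under rigid moves, so the rigid check can never be skipped --- and an \emph{insufficiency} part --- some algorithm passes the rigid check yet is incorrect under non-rigid moves. Necessity is the easy direction and relies on the fact that the rigid model is a special case of the non-rigid one. Fix an algorithm $\mathcal{A}$ and any of its rigid executions $E$. For an arbitrary $\delta>0$, a robot of a non-rigid execution that heads toward a target at distance $d$ must cover a distance in $[\min(\delta,d),\,d]$, and covering exactly $d$ (reaching the target) is always allowed; hence a non-rigid scheduler that never truncates a move replays $E$ verbatim, so $E$ is a legitimate non-rigid execution for every $\delta$. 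Consequently, if $\mathcal{A}$ achieves rendezvous on all non-rigid executions it achieves it on all rigid ones, and contrapositively an algorithm that the rigid model exposes as faulty is a fortiori faulty in the non-rigid model: this is exactly why verifying rigid behavior is necessary.

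For the insufficiency part I would exhibit a witness algorithm $\mathcal{A}^\star$ that achieves rendezvous under rigid moves but not under non-rigid moves; the natural candidate is the two-color \ASYNC algorithm of Okumura\etal~\cite{OkumuraWK17}, which is proved correct assuming rigid moves (and which our model checker independently certifies in the rigid model), which is only known to work under non-rigid moves when the robots know $\delta$, and which constrains the initial color and is therefore not self-stabilizing. The proof then has two obligations. First, confirm that $\mathcal{A}^\star$ solves rendezvous on every rigid \ASYNC schedule. Second, build an infinite fair non-rigid \ASYNC execution of $\mathcal{A}^\star$ in which the robots never gather --- the very kind of counterexample the model checker reports, though it can also be described directly: the adversary fixes a value of $\delta$ unknown to the robots and, exploiting the fact that a robot commits its new color at the end of \COMPUTE but only afterwards performs a move that may be truncated, repeatedly stops that robot short of the other robot's position (by Lemma~\ref{Lem:Mov} the only moves available are \STAY, \HALF and \OTHER, and only \STAY is immune to truncation). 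Because a robot's observation consists solely of its own color, the other robot's color, and the gathered/not-gathered bit --- there is no shared unit of length, so it cannot tell how far short it was stopped --- the algorithm cannot react to the shortfall, and the run cycles through color states forever while the true distance stays strictly positive. The restriction to non-self-stabilizing algorithms in the statement is essential: were $\mathcal{A}^\star$ completely self-stabilizing, Theorem~\ref{Thm:rigid} would force its rigid correctness to entail non-rigid correctness, so no witness could exist.

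The main obstacle is the second obligation: pinning down a concrete $\delta$, activation order and truncation pattern for which one can rigorously certify that the inter-robot distance stays bounded away from $0$ along a fair infinite run, while simultaneously being certain that the same algorithm \emph{does} gather under rigid moves. Discharging the first obligation (correctness of $\mathcal{A}^\star$ over all rigid \ASYNC schedules) is itself an intricate case analysis --- exactly the reasoning the model checker is designed to automate --- and the counterexample must be compatible with the fairness requirement on the scheduler. Once the witness is fixed, the theorem follows by combining it with the necessity direction established above.
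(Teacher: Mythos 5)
Your decomposition into a necessity half and an insufficiency half matches the paper's, and your necessity argument is sound (you note that rigid executions form a subset of non-rigid ones because the adversary may always let a move complete; the paper instead observes that once the inter-robot distance drops below $\delta$ the non-rigid behavior literally coincides with the rigid one --- both arguments work). The genuine gap is in the insufficiency half. You choose as witness the two-color algorithm of Okumura\etal~\cite{OkumuraWK17}, but that algorithm is merely \emph{not known} to work under non-rigid moves without knowledge of $\delta$; the absence of a correctness proof is not a proof of failure, and Her2Cols~\cite{HeribanDT18} shows that two colors \emph{can} suffice in non-rigid \ASYNC, so your candidate might in fact succeed. You then concede that constructing the required infinite fair non-rigid execution whose distance stays bounded away from zero is ``the main obstacle'' and leave it undischarged --- but that construction \emph{is} the entire content of the insufficiency claim, so as written the proof is incomplete.

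The paper closes this hole by picking a witness for which both halves are already established facts in the literature: Vig2Cols restricted to the initial color \BLACK. Viewed as a candidate non-rigid non-self-stabilizing algorithm, it passes the rigid check (it gathers under rigid moves from that initial color) yet is known~\cite{Viglietta13} to fail in non-rigid \ASYNC starting from those very same colors; no new counterexample has to be built. If you want to retain your structure, replace your witness by Vig2Cols-from-\BLACK (or any algorithm whose non-rigid failure is already certified) and your argument goes through.
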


We present a possible approach for checking these algorithms in the appendix. To the best of our knowledge, the 4-color external light algorithm by Okumura\etal~\cite{DBLP:conf/opodis/OkumuraWD18} is the only known rendezvous algorithm that satisfies these two criteria.

Note that our reasoning is only true in the case where the behavior of the algorithm is the same for any distance between $A$ and $B$ that is greater than zero. Recently, Okumura\etal~\cite{DBLP:conf/birthday/OkumuraWK18} introduced an algorithm under the additional assumption that robots have the knowledge of $\delta$. Because of this, the behavior of the algorithm is different when the distance is less than $\delta$, and when it is between $\delta$ and $2\delta$, which means that the rigid and the non-rigid behavior of the algorithm are different.
To prove such algorithms, we also need to both prove the rigid and non-rigid behaviors.

\begin{Theorem}
\label{Thm:wada}
To prove completely self-stabilizing rendezvous algorithms whose behavior differs depending whether the distance between $A$ and $B$ is smaller than $\delta$, between $\delta$ and $2\delta$, or greater than $2\delta$, it is sufficient to consider the case where robots are initially $3\delta$ apart. 
\end{Theorem}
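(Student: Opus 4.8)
The plan is to argue contrapositively by a coupling argument: from any counter-example execution $E$ --- non-rigid, \ASYNC, never achieving rendezvous --- starting from an \emph{arbitrary} initial configuration $C$, I would build a counter-example execution starting from a configuration in which the robots are exactly $3\delta$ apart. Once colours, phases, targets and pending colours are discretised as in the verification model, the configurations at distance exactly $3\delta$ form a finite set $\mathcal{C}_{3\delta}$; hence establishing this suffices to conclude that verifying the algorithm on $\mathcal{C}_{3\delta}$ already certifies complete self-stabilization.

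Two observations single out the value $3\delta$. First, \emph{band-uniformity}: by hypothesis each robot's decision --- its move, and its new colour --- is a function of the two robots' discrete components together with which of the ranges $d<\delta$, $\delta\le d\le 2\delta$, $d>2\delta$ the distance $d$ falls in; consequently two configurations that agree on all discrete components and have \emph{both} distances greater than $2\delta$ trigger identical decisions and identical discrete updates, and \STAY moves leave the distance unchanged, so an arbitrarily long prefix spent entirely in the ``far'' range is reproduced component-for-component when the distance is reset to $3\delta$. Second, \emph{reachability of the small ranges}: from distance $3\delta$ a single \OTHER move by one robot, stopped by the non-rigid adversary anywhere past the mandatory $\delta$, realises any distance in $[0,2\delta]$ --- in particular the two threshold values $\delta$ and $2\delta$ exactly, and every value of the ``close'' and ``between'' ranges. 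This is precisely where $3\delta$, and no smaller starting distance, is forced: from a smaller distance such a move cannot produce the value $2\delta$, so the boundary behaviour there would never be exercised.

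Given $E$, I would split on whether its distance ever drops to $2\delta$ or below. If it never does, the execution stays in the far range; arguing as for Lemma~\ref{Lem:Conv}, the distance is eventually constant at some $d^\star>2\delta$, and beyond that point no \HALF or \OTHER move can complete (completing one would change the distance), so the algorithm thereafter only outputs \STAY. Resetting $d^\star$ to $3\delta$ in that configuration yields a member of $\mathcal{C}_{3\delta}$ on which, by band-uniformity, the algorithm again only outputs \STAY and never gathers. If instead the distance first reaches some $d^+\le 2\delta$ at a completed move, let $C^-$ be the configuration at the \LOOK that triggers that move (so the distance at $C^-$ still exceeds $2\delta$); by band-uniformity the move type and the new colour are unchanged if the distance at $C^-$ is reset to $3\delta$, giving $\hat C^-\in\mathcal{C}_{3\delta}$, and by the reachability observation the non-rigid adversary can then stop the ensuing move so that the new distance equals exactly $d^+$ (possible since $d^+\le 2\delta = 3\delta-\delta$), reaching precisely the configuration $E$ reaches; from there we append the rendezvous-free suffix of $E$. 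In both cases we obtain a counter-example starting in $\mathcal{C}_{3\delta}$, which proves the contrapositive.

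The step I expect to be the main obstacle is making band-uniformity and the re-alignment exact under \ASYNC semantics. A robot's \LOOK and the completion of its \MOVE need not be consecutive, the partner may act in between (changing the distance, and the relevance of any stored target, between a decision and its execution), and with $\delta$-knowledge the destinations of \HALF and \OTHER are concrete points whose geometry genuinely differs at distance $3\delta$ and at the true distance; moreover, from $3\delta$ the far range is ``fragile'' in that a single completed \HALF or \OTHER move exits it, whereas $E$ may linger in that range for many moves. One has to show that resetting the distance to $3\delta$ while freezing the discrete components changes none of the \emph{decisions} along the far prefix --- here the restriction to straight-line moves and the fact that the whole execution lies on the line $(AB)$ are what make the argument go through --- and that the ``exit data'' (the discrete components at the first moment the distance drops to $\le 2\delta$) reached by $E$ is always reachable from some member of $\mathcal{C}_{3\delta}$; the latter should follow from the within-range evolution of the discrete state being deterministic (hence eventually periodic) and thus faithfully witnessed from distance $3\delta$, while the exit distance $d^+\le 2\delta$ is freely realisable. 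A minor point to verify is that every discrete-component combination appearing in the configurations built above is a legitimate complete initial configuration, hence genuinely a member of $\mathcal{C}_{3\delta}$.
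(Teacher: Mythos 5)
Your proposal and the paper's proof rest on the same two pillars --- band-uniformity of the algorithm's decisions and the dichotomy of Lemma~\ref{Lem:Conv} (either the distance stays constant or it converges to $0$) --- but they are organised differently. The paper argues \emph{forward}: it decomposes the verification into three checks (the rigid behaviour gathers; the $>2\delta$ behaviour eventually hands over to the $[\delta,2\delta]$ behaviour; the latter hands over to the rigid one), and justifies starting at $3\delta$ by observing that a static execution is distance-independent within its band, while a non-static one necessarily traverses the bands on its way to $0$. You argue \emph{backward}, replaying an arbitrary counter-example from distance $3\delta$. Your framing has the merit of making explicit why $3\delta$ and not less suffices (one $\delta$-truncated move from $3\delta$ realises all of $[0,2\delta]$, including both threshold values), a point the paper leaves implicit.

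However, your case-2 coupling overreaches in one place: you claim the adversary can stop the exit move from $\hat C^-$ so that the new distance is \emph{exactly} $d^+$. That is true for an \OTHER move (reachable set $[0,2\delta]$ from $3\delta$) but false for a \HALF move, whose reachable set from $3\delta$ is only $[1.5\delta,2\delta]$; if the exit move of $E$ is a \HALF computed at, say, $2.5\delta$ and stopped at $1.25\delta$, no replay from $3\delta$ lands there. To repair this you must retreat to the band level --- land anywhere in the same band as $d^+$ and argue that the discrete evolution from there is identical --- which is precisely the coarser granularity at which the paper's three-check decomposition operates and which dispenses with exact exit distances altogether. With that adjustment (and accepting, as the paper implicitly does, that pending targets are abstracted to move types rather than concrete points), your argument goes through.
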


\section{Verification Model}

\subsection{Position}

Our verification model only needs to consider two different positions (called \NEAR, \SAME) depending on the distance between the two robots. This choice is justified by the definition of the model (two robots, no shared coordinate system, no landmarks, oblivious robots) and Theorem~\ref{Thm:BinSpace}

\subsection{Activation and synchrony}
In the verification model, the activation cycle of a robot $r$ is defined as a sequence of four consecutive atomic events: \LOOK, \COMPUTE, \BEGMOVE, and \ENDMOVE. Each of the four events is as follows:

\begin{description}
\item[\LOOK (\Look)]
	The robot obtains a snapshot observation of the environment which consists of the color of both robots and the  location of the other robot with respect to $r$'s local coordinate system where $r$ is always at the origin.
\item[\COMPUTE (\Compute)]
	The robot executes the algorithm which is defined as a function of the latest observation that returns a new color and a move order. In the verification model, we assume that the light of the robot changes as part of the compute event.
	
\item[\BEGMOVE (\Begmove)]
	The robot begins moving according to the move order. Although it can be observed while moving, the actual position of the robot is actually undefined until the movement is completed with the \ENDMOVE event.
	
	However, we assume that the robot moves according to a straight line toward its target (point computed by the algorithm), and that it only progresses forward until reaching the destination of the move (point within the reachable distance $\delta$).
	
\item[\ENDMOVE (\Endmove)]
	The robot ends its move and has moved a distance of at least $\delta$ towards the target. If the distance to the target was equal or less than $\delta$, the robot has reached its target.

\end{description}

We use this four-events verification model instead of the classical three phases model because of the flexibility it allows when defining variants of the \ASYNC model. We can set the \LOOK phase to be instantaneous by linking a \COMPUTE event to happen right after each \LOOK event, or set to \ASYNCLC by linking \LOOK, \COMPUTE and \BEGMOVE, or \ASYNCMOVE by linking \BEGMOVE and \ENDMOVE, and so on.

\begin{Theorem}
\label{Thm:Scheduling}
The \FSYNC scheduler can be properly simulated by activating sequentially the \LOOK phase of robot $A$, the \LOOK phase of robot $B$, the \COMPUTE \BEGMOVE, and \ENDMOVE of robot $A$ and finally the \COMPUTE \BEGMOVE, and \ENDMOVE of robot $B$, infinitely often.
\end{Theorem}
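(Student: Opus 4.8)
The plan is to set up a behaviour-preserving correspondence between \FSYNC executions (sequences of synchronous rounds) and the executions produced by the proposed serialization, organised into ``super-rounds'' that consist of $\Look_A$, $\Look_B$, then the atomic block $(\Compute,\Begmove,\Endmove)_A$, and finally the atomic block $(\Compute,\Begmove,\Endmove)_B$. First I would isolate the defining feature of \FSYNC: in each round both robots take a snapshot of the \emph{same} configuration $\gamma$ (the one at the start of the round); each robot computes from that snapshot a new colour and a target point expressed in its own coordinate system, which is anchored at its position in $\gamma$ since it has not yet moved; then both robots move toward their targets, reaching them under rigid motion, or being stopped by the adversary at any point past distance $\delta$ under non-rigid motion. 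The successor configuration $\gamma'$ is precisely any configuration obtainable in this way.

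Next I would check that the serialization reproduces exactly this transition. Because $\Look_A$ and $\Look_B$ both precede every \COMPUTE and every \MOVE event of the super-round, the two snapshots are taken in the configuration $\gamma$ at the start of the super-round; hence both robots compute the same colours and the same target points as in the \FSYNC round on $\gamma$. The block $(\Compute,\Begmove,\Endmove)_A$ then moves $A$ (to its target, or to an adversarially chosen point past $\delta$) while $B$'s local state is frozen; in particular $B$'s position — and therefore the absolute location of $B$'s already-computed target point — is unchanged. Thus when $(\Compute,\Begmove,\Endmove)_B$ fires, $B$ moves toward the very point it would in \FSYNC, and the configuration at the end of the super-round equals $\gamma'$. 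Iterating the super-round forever visits the same sequence of boundary configurations as the \FSYNC execution and activates both robots in each super-round, so fairness is preserved; conversely, every serialization execution arises from the \FSYNC execution that performs, in each round, exactly the colour changes and (possibly interrupted) moves carried out by the two atomic blocks. The transient configuration internal to a super-round (after $A$'s block, before $B$'s) is never observed by a \LOOK and corresponds to no \FSYNC configuration, but it is harmless for the properties being verified: a gathering that holds only transiently inside a super-round does not satisfy ``eventually always gathered'', and once a boundary configuration is gathered a correct algorithm orders \STAY for both robots (Lemma~\ref{Lem:Mov}), so no internal separation occurs thereafter.

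The main obstacle is the non-rigid case combined with the ``stale'' snapshot that $B$ uses: one must argue that letting $A$ complete its (possibly interrupted) move \emph{before} $B$ computes changes neither $B$'s behaviour nor the set of reachable boundary configurations. The two key observations are that targets are fixed points in each robot's \LOOK-time frame, so $A$'s displacement leaves the absolute location of $B$'s target invariant; and that an adversary choosing interruption points independently for $A$ and for $B$ in the serialization has exactly the same power as the \FSYNC adversary choosing them independently within a round, so the families of attainable $\gamma'$ coincide. A secondary point to handle carefully is the bookkeeping of pending colours and pending moves: at every super-round boundary no robot is mid-cycle, which matches \FSYNC (whose rounds also expose no intermediate state), so the correspondence is between full configurations with empty pending fields on both sides, and Theorem~\ref{Thm:BinSpace} then lets the whole argument be carried out over the finite two-state position model.
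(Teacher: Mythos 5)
Your proposal is correct and rests on the same key observation as the paper's proof: both \LOOK events are read-only and precede every \COMPUTE and \MOVE event of the round, so each robot snapshots the same configuration as in \FSYNC, and the subsequent write operations (colour updates and moves) can then be serialized without changing the resulting configuration. The paper states this read-before-write argument in two sentences, whereas you additionally spell out the invariance of the absolute target locations, the non-rigid case, fairness, and the harmlessness of the transient mid-super-round configuration — all sound elaborations of the same idea.
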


\begin{Theorem}
\label{Thm:Events}
Two simultaneous events $E1$ and $E2$ can be properly simulated by exploring both sequences $(E1,E2)$ and $(E2,E1)$.
\end{Theorem}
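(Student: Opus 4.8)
The plan is a case analysis on the read/write footprints of the atomic events. I want to show that every configuration reachable by executing $E1$ and $E2$ ``at the same instant'' is also reachable by one of the two sequential interleavings $(E1,E2)$ or $(E2,E1)$, and conversely that neither interleaving produces a configuration unreachable under the simultaneous semantics; together these two directions are exactly what is needed for the simulation to be faithful (``proper'').

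First I would record the footprint of each of the four event types of the verification model: a \Look reads the positions and colours of both robots and writes only the acting robot's stored snapshot; a \Compute reads only the acting robot's snapshot and writes only that robot's (visible) colour and pending move; a \Begmove reads the acting robot's pending move and writes only that robot's phase and observable status; an \Endmove reads the acting robot's pending move, position and $\delta$ and writes only that robot's position and phase. Two events can be simultaneous only if they belong to the two \emph{distinct} robots $A$ and $B$, since each robot performs \Look, \Compute, \Begmove, \Endmove strictly in sequence. The two key observations are then: (i) the write set of any event is contained in the acting robot's local state, so the write sets of $E1$ (on $A$) and $E2$ (on $B$) are disjoint; and (ii) the only event that reads any part of the \emph{other} robot's state is \Look.

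Next I would split into two cases. If neither $E1$ nor $E2$ is a \Look, then by (i) and (ii) neither event reads anything the other writes (indeed their footprints are disjoint), the two transition functions commute, and $(E1,E2)$, $(E2,E1)$ and the simultaneous execution all yield the identical configuration, since applying two updates with disjoint footprints at once coincides with applying them in either order; the same holds if both events are \Look, since a \Look writes no observable field. Exploring either ordering is then exact. In the remaining case exactly one of the two, say $E1$, is a \Look, and $E2$ is a non-\Look event of the other robot that updates a colour, position, or observable status. Here the simultaneous semantics is inherently ambiguous: the snapshot recorded by the looking robot may reflect the value of that field either \emph{before} or \emph{after} the concurrent update, and the adversary is entitled to resolve the tie either way, while by atomicity there is no ``half-updated'' value it could observe. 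But ``before'' is exactly the outcome of $(E1,E2)$ and ``after'' is exactly the outcome of $(E2,E1)$. Hence the configurations reachable under the simultaneous semantics are precisely the outcome of $(E1,E2)$ together with the outcome of $(E2,E1)$, and dually each of these two outcomes is a legitimate resolution of the simultaneity, so no spurious behaviour is introduced. An easy induction extends the argument to any finite set of pairwise-simultaneous events by considering all of their total orders; for two robots only pairs arise, so the two orderings of the statement suffice.

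The step I expect to be the main obstacle is making the semantics of ``two simultaneous events'' precise enough that the theorem is both meaningful and true, and in particular justifying that an atomic \Look concurrent with a colour or position update can observe only the pre-update or the post-update configuration. This atomicity claim is the abstraction that legitimises the whole discretisation, so it should be stated and defended explicitly rather than taken for granted; once it is in place, the soundness direction (no counter-example of the continuous model is missed) and the completeness direction (no counter-example absent from the continuous model is invented) both follow from the commute/ambiguity dichotomy above.
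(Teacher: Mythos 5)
Your proposal is correct and follows essentially the same route as the paper: a two-case split into (a) a \LOOK concurrent with a write, where an atomicity/regularity assumption guarantees the observed value is either the pre- or post-update one, matching the two orderings, and (b) all other pairs, whose footprints do not conflict so the events commute. Your version is somewhat more detailed (explicit footprints, both soundness and completeness directions), but the key idea — including the register-regularity assumption you flag as the main obstacle, which the paper states in a footnote — is the same.
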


\begin{Theorem}
\label{Thm:Fair}
Except for non-rigid non-self-stabilizing rendezvous, a fair scheduler can be properly simulated by an 8$N$-bounded scheduler, where $N$ denotes the number of colors available to the algorithm.
\end{Theorem}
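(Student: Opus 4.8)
The plan is to prove the equivalence implicit in ``properly simulated'': every $8N$-bounded schedule activates each robot at least once every $8N$ steps and is therefore fair, so correctness under all fair schedulers trivially entails correctness under all $8N$-bounded ones; the content of the theorem is the converse, namely that if the algorithm admits a \emph{fair} counter-example --- an infinite fair execution in which rendezvous is never achieved --- then it admits an $8N$-bounded one. I would prove this by a loop-excision (compression) argument carried out on a fixed fair counter-example $E=C_0,C_1,C_2,\dots$. Since we are outside the non-rigid non-self-stabilizing case, the earlier reductions apply (Theorem~\ref{Thm:BinSpace}, together with Theorems~\ref{Thm:rigid}, \ref{Thm:SoNice} and~\ref{Thm:targetSS}), so I may work in the finite-state verification model of Section~\ref{sec:theorems}/Section~4: a configuration is determined by the distance state (\NEAR or \SAME) together with, for each robot, its current event among the four of Section~4.2 and its colour among the $N$ available, with no pending data to track separately; moreover once the robots are gathered they stay gathered (Lemma~\ref{Lem:Mov}), so in a counter-example the distance state is \NEAR forever after some point.

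The key observation is that whenever one robot, say $A$, is not activated over a stretch of consecutive steps, its event and colour are frozen, so the only part of the global configuration that can change over that stretch is the triple (distance, event of $B$, colour of $B$), which ranges over at most $2\cdot 4\cdot N = 8N$ values. Hence, if $A$ is frozen for more than $8N$ consecutive steps, two configurations inside that stretch coincide: writing $C_i=C_j$ with $i<j$ both inside the stretch, the sequence obtained from $E$ by deleting $C_{i+1},\dots,C_j$ (together with the corresponding scheduler choices) is again a valid infinite execution, since the transition leaving $C_i$ is legal --- it is the transition $E$ takes out of $C_j$ --- and it is still a counter-example, since its configuration sequence is a subsequence of that of $E$. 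Performing this excision exhaustively, symmetrically for stretches in which $B$ is frozen, yields an execution $E'$ in which no robot is ever frozen for more than $8N$ steps, i.e.\ an $8N$-bounded schedule; being $8N$-bounded it is fair, and being a subsequence of $E$ it is still a counter-example, which is exactly what is required.

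I expect the main obstacle to be making the ``exhaustive excision'' rigorous: naively iterating ``find a long frozen stretch and cut it'' could a priori perform infinitely many cuts inside a bounded region and destroy either the infiniteness or the well-definedness of the resulting execution. I would sidestep this by replacing the global iteration with a single greedy left-to-right pass that builds $E'$ configuration by configuration while maintaining the invariant that the prefix constructed so far contains no frozen stretch longer than $8N$ steps: whenever continuing to copy $E$ would create such a stretch for some robot, jump forward to the first repeated configuration inside the stretch --- which exists by the pigeonhole count above --- rather than copying the stretch verbatim. A short argument then shows this pass never stalls and produces an infinite $8N$-bounded execution whose configuration sequence is a subsequence of $E$'s. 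A secondary point worth making explicit is why the exclusion of non-rigid non-self-stabilizing rendezvous is needed here at all: for that family Theorem~\ref{Thm:BinSpace} fails and the distance is genuinely multi-valued (Theorem~\ref{Thm:sht}), so the bound of $8N$ on the varying sub-state of the frozen robot's partner no longer holds and a larger, algorithm-dependent bound is required instead.
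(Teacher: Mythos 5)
Your proposal rests on exactly the same pigeonhole count as the paper's proof --- $2$ distance states $\times$ $4$ events per activation cycle $\times$ $N$ colours $= 8N$ --- and draws the same conclusion that bounding the number of activations of one robot while the other is frozen to $8N$ loses no behaviours; the loop-excision and greedy left-to-right compression scaffolding you add is a more rigorous rendering of what the paper disposes of in two informal sentences (``every possible snapshot has been explored''), and your closing remark on why the non-rigid non-self-stabilizing case is excluded matches the paper's rationale. This is essentially the paper's argument, so there is no substantive divergence to report.
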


Because we check for a maximum of 5 colors, the model checker uses a 40-bounded scheduler.

\newcommand\activation[2]{%
	\begin{pgfonlayer}{background}
		\path (#1.west |- #1.north)+(-1mm,1mm) node (a1) {};
		\path (#2.east |- #2.south)+(1mm,-1mm) node (a2) {};
		\path[cycle] (a1) rectangle (a2);
	\end{pgfonlayer}
}
\newcommand\scheduleline[2]{%
	\path (#1)+(0,5mm) node (a1) {};
	\path (#2)+(0,-5mm) node (a2) {};	
	\path[draw,dotted,thick] (a1) -- (a2);
}
\newcommand\scheduleDot[1]{%
	\draw[fill=black] (#1) circle (2pt);
}
\newcommand\scheduleAt[1]{%
	\path (#1)+(0,10mm) node (top) {};
	\path (#1)+(0,-10mm) node (bot) {};
	\draw[dotted,thick] (top) -- (bot);
	\scheduleDot{#1}
}
\newcommand\activateA[2]{%
	\draw[->] (#1.west) -- (#2.south west);
}
\newcommand\activateB[2]{%
	\draw[->] (#1.west) -- (#2.north west);
}

\begin{figure}
    \centering
    \begin{tabular}{cc}
    \begin{minipage}{0.61\linewidth}
    	\raggedright
    	\begin{tikzpicture}
	    	\matrix [row sep=2mm,column sep=.5mm] {
    			\node{\scriptsize Robot $A$};
	    		&
	    		& \node[lcm](Al){\Look};
	    		&
	    		& \node[lcm](Ac){\Compute};
    			&
			    & \node[lcm](Ab){\Begmove};
			    &
			    & %
			    &
			    & \node[lcm](Ae){\Endmove};
			    &
			    & %
			    &
			    & %
			    &
			    & \node[lcm](Al2){\Look};
			    &
			    & \node[lcm](Ac2){\Compute};
			    &
			    & \phantom{\node[lcm](Ab2){\Begmove};}%
			    &
			    \\
			    \node{\scriptsize scheduler};
			    & \node[schedot](sd0){};
			    & \node[sched](sc1){};
			    & \node[schedot](sd1){};
			    & \node[sched](sc2){};
			    & \node[schedot](sd2){};
			    & \node[sched](sc3){};
			    & \node[schedot](sd3){};
			    & \node[sched](sc4){};
			    & \node[schedot](sd4){};
			    & \node[sched](sc5){};
			    & \node[schedot](sd5){};
			    & \node[sched](sc6){};
			    & \node[schedot](sd6){};
			    & \node[sched](sc7){};
			    & \node[schedot](sd7){};
			    & \node[sched](sc8){};
			    & \node[schedot](sd8){};
			    & \node[sched](sc9){};
			    & \node[schedot](sd9){};
			    & \node[sched](sc10){};
			    & \node[schedot](sd10){};
			    \\
			    \node{\scriptsize Robot $B$};
			    &
			    & %
			    &
			    & %
			    &
			    & %
			    &
			    & \node[lcm](Bl){\Look};
			    &
			    & %
			    &
			    & \node[lcm](Bc){\Compute};
			    &
			    & \node[lcm](Bb){\Begmove};
			    &
			    & %
			    &
			    & %
			    &
			    & \node[lcm](Be){\Endmove};
			    &
			    \\			
		    };
		    \activation{Al}{Ae}
		    \activation{Bl}{Be}
		    \activation{Al2}{Ab2}
		    \draw (sd0) -- (sd10);
		    \scheduleAt{sd0}
		    \scheduleAt{sd1}
		    \scheduleAt{sd2}
		    \scheduleAt{sd3}
		    \scheduleAt{sd4}
		    \scheduleAt{sd5}
		    \scheduleAt{sd6}
		    \scheduleAt{sd7}
		    \scheduleAt{sd8}
		    \scheduleAt{sd9}
		    \scheduleAt{sd10}
		    \activateA{sc1}{Al}
		    \activateA{sc2}{Ac}
		    \activateA{sc3}{Ab}
		    \activateB{sc4}{Bl}
		    \activateA{sc5}{Ae}
		    \activateB{sc6}{Bc}
		    \activateB{sc7}{Bb}
		    \activateA{sc8}{Al2}
		    \activateA{sc9}{Ac2}
		    \activateB{sc10}{Be}
	    \end{tikzpicture}
	    \subcaption{ASYNC}
	    \label{fig:sched:ASYNC}
    \end{minipage}
    &
    \begin{minipage}{0.33\linewidth}
	    \raggedleft
        \begin{tikzpicture}
    	    \matrix [row sep=2mm,column sep=3mm] {
    		    \node[lcm,minimum height=4mm] (actS) {~};
    		    & \node[anchor=west] {\footnotesize activation step};
            \\
    		    \node (actC) {~};
    		    & \node[anchor=west] {\footnotesize activation cycle};
            \\
                \node (sch){};
                & \node[anchor=west] {\footnotesize non-deterministic choice};
            \\
            };
            \activation{actC}{actC}
            \scheduleDot{sch}
        \end{tikzpicture}
    \end{minipage}
    \end{tabular}
    \\\bigskip
    \begin{tabular}{c@{}c}
        \begin{minipage}{0.49\linewidth}
	        \raggedright
	        \begin{tikzpicture}
		        \matrix [row sep=2mm,column sep=0mm] {
			    \node{\scriptsize Robot $A$};
			        & \node (Asched0){};
			        & \node[lcm](Al){\Look};
			        & \node[lcm](Ac){\Compute};
			        & \node[lcm](Ab){\Begmove};
			        & \node[lcm](Ae){\Endmove};
			        & \node (Asched1){};
			        &
			        &
			        &
			        &
			        & \node (Asched2){};
			        \\
			        \node{\scriptsize scheduler};
			        & \node[schedot](sd0){};
			        & \node[sched](sc1){};
			        & \node[sched](sc2){};
			        & \node[sched](sc3){};
			        & \node[sched](sc4){};
			        & \node[schedot](sd1){};
			        & \node[sched](sc5){};
			        & \node[sched](sc6){};
			        & \node[sched](sc7){};
			        & \node[sched](sc8){};
			        & \node[schedot](sd2){};
			        \\
			        \node{\scriptsize Robot $B$};
			        & \node (Bsched0){};
			        &
			        &
			        &
			        &
			        & \node (Bsched1){};
			        & \node[lcm](Bl){\Look};
			        & \node[lcm](Bc){\Compute};
			        & \node[lcm](Bb){\Begmove};
			        & \node[lcm](Be){\Endmove};
			        & \node (Bsched2){};
			        \\			
		        };
		        \activation{Al}{Ae}
		        \activation{Bl}{Be}
		        \draw (sd0) -- (sd2);
		        \scheduleAt{sd0}
		        \scheduleAt{sd1}
		        \scheduleAt{sd2}
		        \activateA{sc1}{Al}
		        \activateA{sc2}{Ac}
		        \activateA{sc3}{Ab}
		        \activateA{sc4}{Ae}
		        \activateB{sc5}{Bl}
		        \activateB{sc6}{Bc}
		        \activateB{sc7}{Bb}
		        \activateB{sc8}{Be}
	        \end{tikzpicture}
	        \subcaption{\CENTRALIZED}
        	\label{fig:sched:centralized}
	    \end{minipage}
	    &
	    \begin{minipage}{0.475\linewidth}
	        \raggedright
        	\begin{tikzpicture}
        		\matrix [row sep=2mm,column sep=0mm] {
        			\node{\scriptsize Robot $A$};
        			& \node (Asched0){};
        			& \node[lcm](Al){\Look};
        			&
        			& \node[lcm](Ac){\Compute};
        			& \node[lcm](Ab){\Begmove};
	        		& \node[lcm](Ae){\Endmove};
	        		&
	        		&
	        		& \phantom{\node[lcm](Aend){\Endmove};}
	        		& \node (Asched1){};
	        		\\
	        		\node{\scriptsize scheduler};
	        		& \node[schedot](sd0){};
	        		& \node[sched](sc1){};
	        		& \node[sched](sc2){};
	        		& \node[sched](sc3){};
	        		& \node[sched](sc4){};
	        		& \node[sched](sc5){};
	        		& \node[sched](sc6){};
	        		& \node[sched](sc7){};
	        		& \node[sched](sc8){};
	        		& \node[schedot](sd1){};
	        		\\
	        		\node{\scriptsize Robot $B$};
	        		& \node (Bsched0){};
	        		& \phantom{\node[lcm](Bstart){\Look};}
	        		& \node[lcm](Bl){\Look};
	        		&
	        		&
	        		&
	        		& \node[lcm](Bc){\Compute};
	        		& \node[lcm](Bb){\Begmove};
	        		& \node[lcm](Be){\Endmove};
	        		& \node (Bsched1){};
	        		\\			
	        	};
	        	\activation{Al}{Aend}
	        	\activation{Bstart}{Be}
	        	\draw (sd0) -- (sd1);
	        	\scheduleAt{sd0}
	        	\scheduleAt{sd1}
	        	\activateA{sc1}{Al}
	        	\activateB{sc2}{Bl}
	        	\activateA{sc3}{Ac}
	        	\activateA{sc4}{Ab}
	        	\activateA{sc5}{Ae}
	        	\activateB{sc6}{Bc}
	        	\activateB{sc7}{Bb}
	        	\activateB{sc8}{Be}
	        \end{tikzpicture}
        	\subcaption{\FSYNC}
        	\label{fig:sched:FSYNC}
        \end{minipage}
	\end{tabular}
	\caption{Simulation of main schedulers as a Promela process.}
	\label{fig:sched}
\end{figure}
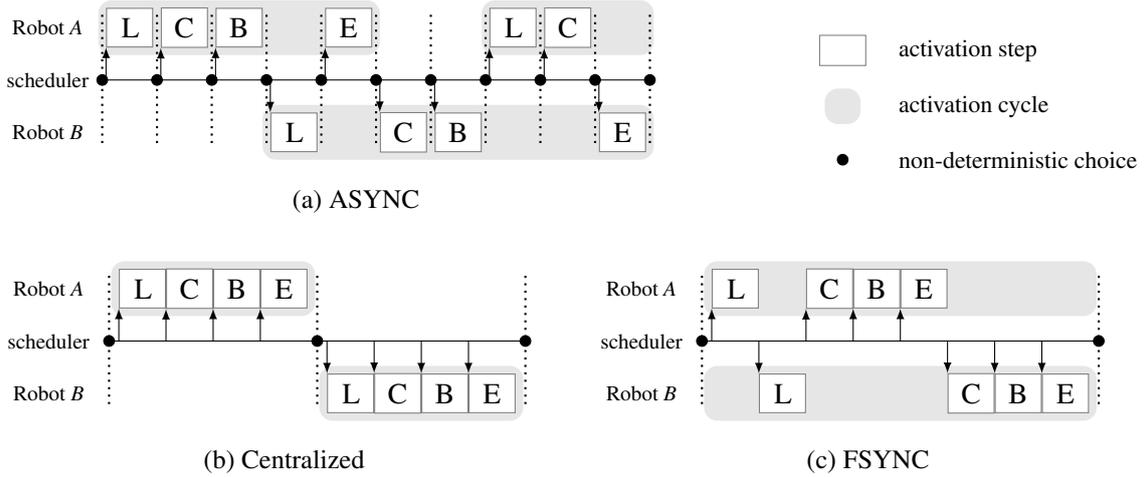

\subsection{Movement resolution}

The key to our verification model is the idea of movement resolution. When a robot completes its movement, this translates into a change of the verification model according to specific rules, which are described below.

\noindent\textbf{Stationary moves}
When the computed move is invariant or stationary (e.g., \OTHER when the observed position is \SAME), its pending move is systematically translated to an equivalent \STAY move. A robot that has a \STAY pending move is stationary, and hence is not observed as moving between \BEGMOVE and \ENDMOVE.

\noindent\textbf{From \NEAR or \SAME position}
The key aspect of the verification model is the case when the robots are in rigid motion (\NEAR and \SAME), and we only detail the resolution of moves for this combined case.

\begin{description}
\item[\STAY] No change.

\item[\MISS] The pending move is a sure miss. A miss happens for instance if $r$ observes the other robot while it moves. It also happens indirectly during movement resolution. The result of a \MISS move is always \NEAR (in particular, it can happen if the position was \SAME).

\item[\OTHER]
	If the position is \SAME, then the move is treated as a \STAY. If the other robot is either \STAY or \BOT, then the position is now \SAME. Else, if the other robot has a pending move, it is converted to a \MISS and the position is now \SAME.
	
\item[\HALF]
    If the pending move of the other robot is \STAY or \BOT, then neither the distance nor the pending move of the other robot changes. 
	If the pending move of the other robot is also \HALF, then the move is potentially successful and the pending move of the other robot is changed into an \OTHER move that targets the location just newly reached. Thus, provided that the first robot doesn't move in the meantime, the movement of the other robot later leads to \SAME.
	Else, the distance is still \NEAR and the pending move or the other robot is now \MISS.
\end{description}

\begin{table}
    \centering
    \caption{Movement resolution.
        Each tuple represents (position, me.pending, other.pending).
        Major changes appear in boldface.
        A bottom value (\BOT) represents the absence of pending moves.
        Wildcard (*) replaces any value and placeholder (--) retains the original value. Rule precedence is from top to bottom.
}
    \label{tab:movement}
    \small
    \renewcommand{\arraystretch}{0.9}
    \begin{tabular}{@{(}c@{, }c@{, }c@{) $\leadsto$ (}c@{, }c@{, }c@{)}}
        \hline
        \multicolumn{6}{l}{ from \NEAR or \SAME }\\
        *     & \STAY  & *              & --             & \BOT & -- \\
        *     & \MISS  & \STAY | \BOT   & \NEAR          & \BOT & -- \\
        *     & \MISS  & *              & \NEAR          & \BOT & \textbf{\MISS} \\
        *     & \OTHER & \STAY | \BOT   & \textbf{\SAME} & \BOT & -- \\
        \SAME & \OTHER & *              & \SAME          & \BOT & -- \\
        \NEAR & \OTHER & *              & \textbf{\SAME} & \BOT & \textbf{\MISS} \\
        *     & \HALF  & \HALF          & --             & \BOT & \textbf{\OTHER} \\
        *     & \HALF  & \STAY | \BOT   & --             & \BOT & -- \\
        *     & \HALF  & *              & --             & \BOT & \textbf{\MISS} \\
        \hline
    \end{tabular}
\end{table}

\section{Checking Rendezvous Algorithms}

\subsection{Rendezvous Algorithms}

To assess the verification model, we have checked three trivial baseline algorithms as well as seven known algorithms from the literature. For each of these algorithms, it is widely-known in which models they achieve rendezvous or fail. Unless explicitly stated otherwise, algorithms are non-rigid and self-stabilizing. The latter seven algorithms are detailed in Figure~\ref{fig:algos} and the code for some of them can be found in the appendix.

\begin{figure}
    \centering
    \begin{minipage}{0.45\linewidth}
      \centering
      \begin{subfigure}{\linewidth}
        \begin{tabular}{l@{\quad$\leadsto$\quad}l}
            (\BLACK, \BLACK) & \WHITE, \STAY \\
            (\BLACK, \WHITE) & skip \\
            (\WHITE, \BLACK) & --, \OTHER\\
            (\WHITE, \WHITE) & \BLACK, \HALF
        \end{tabular}
        \caption{Vig2Cols: 2 colors for \ASYNCLC \cite{Viglietta13}}
        \label{fig:algos:Vig2Cols}
      \end{subfigure}
      \\\bigskip
      \begin{subfigure}{\linewidth}
        \begin{tabular}{l@{\quad$\leadsto$\quad}l}
            (\BLACK, \BLACK) & \WHITE, \HALF \\
            (\BLACK, \WHITE) & --, \OTHER \\
            (\BLACK, \RED)   & skip \\
            (\WHITE, \BLACK) & skip \\
            (\WHITE, \WHITE) & \RED, \STAY \\
            (\WHITE, \RED)   & --, \OTHER \\
            (\RED, \BLACK)   & --, \OTHER \\
            (\RED, \WHITE)   & skip \\
            (\RED, \RED)     & \BLACK, \STAY
        \end{tabular}
        \caption{Vig3Cols: 3 colors for \ASYNC \cite{Viglietta13}}
        \label{fig:algos:Vig3Cols}
      \end{subfigure}
      \\\bigskip
      \begin{subfigure}{\linewidth}
        \begin{tabular}{l@{\quad$\leadsto$\quad}l}
            (\BLACK, \BLACK) & \WHITE, \STAY \\
            (\BLACK, \WHITE) & skip \\
            gathered         & skip \\
            (\WHITE, \BLACK) & --, \OTHER\\
            (\WHITE, \WHITE) & \BLACK, \HALF
        \end{tabular}
        \caption{Her2Cols: 2 colors for \ASYNC \cite{HeribanDT18}}
        \label{fig:algos:Opt2Cols}
      \end{subfigure}
    \end{minipage}
    \hfill
    \begin{minipage}{0.45\linewidth}
      \centering
      \begin{subfigure}{\linewidth}
        \begin{tabular}{l@{\quad$\leadsto$\quad}l}
            (*, \BLACK) & \WHITE, \HALF \\
            (*, \WHITE) & \RED, \STAY \\
            (*, \RED)   & \BLACK, \OTHER
        \end{tabular}
        \caption{Flo3ColsX: 3 colors external for \SSYNC\cite{DBLP:journals/tcs/FlocchiniSVY16}}
        \label{fig:algos:Ext3Cols}
      \end{subfigure}
      \\\bigskip
      \begin{subfigure}{\linewidth}
        \begin{tabular}{l@{\quad$\leadsto$\quad}l}
            (*, \BLACK) & \WHITE, \HALF \\
            (*, \WHITE) & \RED, \STAY \\
            (*, \RED)   & \YELLOW, \OTHER \\
            (*, \YELLOW)   & \GREEN, \STAY \\
            (*, \GREEN)   & \BLACK, \STAY
        \end{tabular}
        \caption{Oku5ColsX: 5 colors external for \ASYNCLC\cite{DBLP:conf/opodis/OkumuraWD18}}
        \label{fig:algos:Oku5}
      \end{subfigure}
      \\\bigskip
      \begin{subfigure}{\linewidth}
        \begin{tabular}{l@{\quad$\leadsto$\quad}l}
            (*, \BLACK) & \WHITE, \HALF \\
            (*, \WHITE) & \RED, \STAY \\
            (*, \RED)   & \YELLOW, \OTHER \\
            (*, \YELLOW)   & \BLACK, \STAY
        \end{tabular}
        \caption{Oku4ColsX: 4 colors external Quasi-Self-Stabilizing for \ASYNCLC\cite{DBLP:conf/opodis/OkumuraWD18}}
        \label{fig:algos:Oku4}
      \end{subfigure}
      \\\bigskip
      \begin{subfigure}{\linewidth}
        \begin{tabular}{l@{\quad$\leadsto$\quad}l}
            (*, \BLACK) & \WHITE, \HALF \\
            (*, \WHITE) & \RED, \STAY \\
            (*, \RED)   & \WHITE, \OTHER 
        \end{tabular}
        \caption{Oku3ColsX: 3 colors external rigid Non-Self-Stabilizing for \ASYNCLC\cite{DBLP:conf/opodis/OkumuraWD18}}
        \label{fig:algos:Oku3}
      \end{subfigure}
    \end{minipage}
    \\
    \bigskip
    \caption{Rendezvous algorithms from the literature. Guards match the (me.color, other.color). Wildcard (*) replaces any value, placeholder (--) retains the original value, "skip" means no change, "gathered" holds only when both robots have the same position. Rule precedence is from top to bottom.}
    \label{fig:algos}
\end{figure}

\begin{enumerate}
\item ``NoMove'': the robots never move and hence the algorithm always fails to achieve rendezvous regardless of the model;
\item ``ToHalf'': the robots always target the midpoint and the algorithm fails in all models but \FSYNC.
\item ``ToOther'': the robots always target the other robot's position and the algorithm fails in all models but \CENTRALIZED.
\item ``Vig2Cols'': the algorithm (Figure~\ref{fig:algos:Vig2Cols}) was originally proved correct in \SSYNC~\cite{Viglietta13} but was later proved to also achieve rendezvous in \ASYNCLC \cite{OkumuraWK17}. The algorithm is known to fail in \ASYNC \cite{Viglietta13}.
\item ``Vig3Cols'': the algorithm (Figure~\ref{fig:algos:Vig3Cols}) is known to succeed in \ASYNC and consequently in all other weaker models~\cite{Viglietta13}.
\item ``Her2Cols'': the algorithm (Figure~\ref{fig:algos:Opt2Cols}) is an extension of "Vig2Cols" that uses only two colors but succeeds in \ASYNC \cite{HeribanDT18}. The algorithm is optimal in the sense that rendezvous in \ASYNC cannot possibly be achieved with fewer colors.
\item ``Flo3ColsX'': the algorithm (Figure~\ref{fig:algos:Ext3Cols}) achieves rendezvous in \SSYNC in a model with external colors. The algorithm is known to succeed in \SSYNC and to fail in \ASYNC \cite{DBLP:journals/tcs/FlocchiniSVY16}.
\item ``Oku5colsX'': the algorithm achieves rendezvous in \ASYNCLC in a model with external colors. The algorithm is known to succeed in \ASYNCLC and to fail in \ASYNC \cite{DBLP:conf/opodis/OkumuraWD18}.
\item ``Oku4colsX'': the algorithm achieves rendezvous in \ASYNCLC in a model with external colors. It is quasi-self-stabilizing, meaning in requires the starting colors of the robots to be identical. The algorithm is known to succeed in \ASYNCLC and to fail in \ASYNC \cite{DBLP:conf/opodis/OkumuraWD18}.
\item ``Oku3colsX'': the algorithm achieves rendezvous in \ASYNCLC in a model with external colors. It is a rigid, non-self-stabilizing algorithm. The algorithm is known to succeed in \ASYNCLC and to fail in \ASYNC~\cite{DBLP:conf/opodis/OkumuraWD18}.

\end{enumerate}

\subsection{Verification by Model Checking}

Given a rendezvous algorithm and a verification model, the SPIN model checker essentially verifies that the following liveness property (expressed in LTL) holds in every possible execution:
\begin{quote}
\begin{verbatim}
ltl gathering { <> [] (position == SAME) }
\end{verbatim}
\end{quote}

The formula defines a predicate called \verb|gathering| with the meaning that there is a time after which the position is always \SAME. Concretely, to verify the property, the model checker runs an exhaustive search in the transition graph of configurations such that all initial configurations lead to some cycle such that the predicate \verb|gathering| holds or, in other words, that the variable \verb|position| is equal to \SAME in every configuration of such cycle(s).

\newcommand\OK{\checkmark}
\newcommand\FAIL{-}
\newcommand\slanted[1]{\multicolumn{1}{l}{\rlap{\rotatebox{45}{#1}}}}

%
%

\begin{table}
	\centering
    \small
	\caption{Results of model-checking liveness}
	
	\renewcommand{\arraystretch}{0.9}
	\begin{tabular}{ccccccl}
	&
	&
	& LC-atomic
	& Move-atomic
	&
	\\
	  \CENTRALIZED
	& \FSYNC
	& \SSYNC
	& \ASYNC
	& \ASYNC
	& \ASYNC
	& 
	\\ \hline
\multicolumn{7}{c}{Non-Rigid Self-Stabilizing}
	\\ \hline
	  \FAIL & \FAIL & \FAIL & \FAIL & \FAIL & \FAIL
	& NoMove
	\\
	  \FAIL & \OK & \FAIL & \FAIL & \FAIL & \FAIL
	& ToHalf
	\\
	  \OK & \FAIL & \FAIL & \FAIL & \FAIL & \FAIL
	& ToOther
	\\
	  \OK & \OK & \OK & \OK & \FAIL & \FAIL
	& Vig2Cols
	\\
	  \OK & \OK & \OK & \OK & \OK & \OK
	& Vig3Cols
	\\
	  \OK & \OK & \OK & \OK & \OK & \OK
	& Her2Cols
	\\
	  \OK & \OK & \OK & \FAIL & \FAIL & \FAIL
	& Flo3ColsX
	\\
	  \OK & \OK & \OK & \OK & \FAIL & \FAIL
	& Oku5ColsX \\
	  \OK & \FAIL & \FAIL & \FAIL & \FAIL & \FAIL
	& Oku4ColsX \\
	  \OK & \FAIL & \FAIL & \FAIL & \FAIL & \FAIL
	& Oku3ColsX
	\\ \hline	
\multicolumn{7}{c}{Rigid\footnotemark[3] Quasi-Self-Stabilizing}
	\\ \hline
	  \OK & \OK & \OK & \OK & \FAIL & \FAIL
	& Oku4Cols QSS\footnotemark[3]
	\\ \hline

\multicolumn{7}{c}{Rigid Non-Self-Stabilizing}
	\\ \hline
	  \OK & \OK & \OK & \OK & \FAIL & \FAIL
	& Oku3Cols NSS
	\\ \hline
	\end{tabular}
\end{table}

\addtocounter{footnote}{1}
\footnotetext{This algorithm is supposed to be non-rigid. However, as is shown in the appendix, proving non-rigid, non-self-stabilizing algorithms is a lot trickier and cannot be done by the current version of our model checker.}

Two of those results were actually unexpected: Oku3colsX~\cite{DBLP:conf/opodis/OkumuraWD18} and Oku4colsX~\cite{DBLP:conf/opodis/OkumuraWD18} are not supposed to be self-stabilizing at all, yet are verified to be self-stabilizing under the centralized scheduler by our model checker. However, looking in details at the algorithms, it turns out that the key counter-examples to self-stabilization rely on a simultaneous execution of both robots, which explains the result.

\section{Conclusion and Future Work}

In this paper, we introduced the first model for continuous space rendezvous algorithms that enables mechanical verification. To achieve this, we designed a new abstraction layer for mobile robots and proved that our new model was sufficient for proving properties that are relevant for our purpose. This lead to unexpected but interesting results for general self-stabilizing robot algorithms, such as the non existence of rigid-only self-stabilizing algorithms, and the possible emergence of memory in a system of oblivious robots due to \ASYNC scheduling.
Using SPIN as a basis for our work, we were able to confirm known results for ten different rendezvous algorithms proposed in the literature (performance results are presented in Table~\ref{fig:ModChe}). 
\\
Our framework is, however still incomplete and ongoing. While we describe a way to extend our model to non-rigid non-self-stabilizing algorithms, the actual implementation is not complete yet. 
Also, although the proof arguments sustaining our abstract model are relatively simple, for the sake of having a completely mechanically verified result, we would like to have rewrite them formally using a proof assistant such as CoQ~\cite{PotopSTU19}.

\begin{table}
	\centering
	\caption{Model Checker runtime performance for the \ASYNC scheduler on an Intel i7-8650U running SPIN 6.4.9 on Arch Linux.}
	\label{fig:ModChe}
	\begin{tabular}{lrrrrrr}
	    & \multicolumn{2}{c}{States} & & & Runtime & Memory
	  \\\cline{2-3}
	    & Sorted & Matched & Transitions & Atomic steps &
	       \multicolumn{1}{c}{[ms]} & \multicolumn{1}{c}{[MB]}
	  \\ \hline
	  NoMove
	  &  12,080 &      2,738 &     17,604 &     73,027 &   220 & 145
	  \\
	  ToHalf
	  &   6,141 &        979 &     10,979 &     57,152 &   130 & 135
	  \\
	  ToOther
	  &   4,046 &         57 &      8,367 &     34,372 &   110 & 134
	  \\
	  Vig2Cols
	  & 188,010 &  4,014,448 &  5,452,656 & 33,925,728 &  3,110 & 151
	  \\
	  Vig3Cols
	  & 612,209 & 13,678,976 & 18,419,016 & $114 \times 10^6$ & 11,200 & 190
	  \\
	  Her2Cols
	  & 395,150 &  8,589,648 & 11,652,481 & 72,971,392 &  6,840 & 170
	  \\
	  Flo3ColsX
	  &  13,053 &     48,509 &     80,419 &    440,286 &   210 & 135
	  \\
	  Oku5ColsX
	  & 414,247 &  8,981,645 & 12,126,155 & 73,027,637 &  7,870 & 172
	  \\
	  Oku4ColsX
	  & 307,795 &  6,607,778 &  8,936,310 & 54,718,251 &  5,080 & 162
	  \\
	  Oku3ColsX
	  &  83,072 &  1,714,653 &  2,329,400 & 14,330,584 &  1,380 & 142
	  \\
	  Oku4Cols QSS
	  & 307,793 &  6,607,778 &  8,936,308 & 54,718,251 &  5,110 & 162
	  \\
	  Oku3Cols NSS
	  &  83,070 &  1,714,653 &  2,329,398 & 14,330,584 &  1,380 & 142
	  \\ \hline
	\end{tabular}
\end{table}

\newpage

{\small

\begin{thebibliography}{10}

\bibitem{DBLP:journals/siamcomp/AgmonP06}
Noa Agmon and David Peleg.
\newblock Fault-tolerant gathering algorithms for autonomous mobile robots.
\newblock {\em {SIAM} J. Comput.}, 36(1):56--82, 2006.

\bibitem{DBLP:journals/trob/AndoOSY99}
Hideki Ando, Yoshinobu Oasa, Ichiro Suzuki, and Masafumi Yamashita.
\newblock Distributed memoryless point convergence algorithm for mobile robots
  with limited visibility.
\newblock {\em {IEEE} Trans. Robotics and Automation}, 15(5):818--828, 1999.

\bibitem{AugerBCTU13}
C{\'{e}}dric Auger, Zohir Bouzid, Pierre Courtieu, S{\'{e}}bastien Tixeuil, and
  Xavier Urbain.
\newblock Certified impossibility results for byzantine-tolerant mobile robots.
\newblock In {\em Proc. 15th Intl. Symp. on Stabilization, Safety, and Security
  of Distributed Systems {(SSS)}}, pages 178--190, November 2013.

\bibitem{BalabonskiCPRTU18}
Thibaut Balabonski, Pierre Courtieu, Robin Pelle, Lionel Rieg, S{\'{e}}bastien
  Tixeuil, and Xavier Urbain.
\newblock Brief announcement: Continuous vs. discrete asynchronous moves: {A}
  certified approach for mobile robots.
\newblock In {\em Proc. 20th Intl. Symp. on Stabilization, Safety, and Security
  of Distributed Systems {(SSS)}}, pages 404--408, November 2018.

\bibitem{BalabonskiPRT18}
Thibaut Balabonski, Robin Pelle, Lionel Rieg, and S{\'{e}}bastien Tixeuil.
\newblock A foundational framework for certified impossibility results with
  mobile robots on graphs.
\newblock In {\em Proc. 19th intl. Conf. on Distributed Computing and
  Networking, {(ICDCN)}}, pages 5:1--5:10, January 2018.

\bibitem{berard15infsoc}
B{\'e}atrice B{\'e}rard, Pierre Courtieu, Laure Millet, Maria Potop-Butucaru,
  Lionel Rieg, Nathalie Sznajder, S{\'e}bastien Tixeuil, and Xavier Urbain.
\newblock {[Invited Paper] Formal Methods for Mobile Robots: Current Results
  and Open Problems}.
\newblock {\em {International Journal of Informatics Society}}, 7(3):101--114,
  2015.

\bibitem{BerardLMPTT16}
B{\'{e}}atrice B{\'{e}}rard, Pascal Lafourcade, Laure Millet, Maria
  Potop{-}Butucaru, Yann Thierry{-}Mieg, and S{\'{e}}bastien Tixeuil.
\newblock Formal verification of mobile robot protocols.
\newblock {\em Distributed Computing}, 29(6):459--487, 2016.

\bibitem{BonnetDPPT14}
Fran{\c{c}}ois Bonnet, Xavier D{\'{e}}fago, Franck Petit, Maria
  Potop{-}Butucaru, and S{\'{e}}bastien Tixeuil.
\newblock Discovering and assessing fine-grained metrics in robot networks
  protocols.
\newblock In {\em Proc. 33rd {IEEE} Intl. Symp. on Reliable Distributed Systems
  Workshops, ({SRDS} Workshops)}, pages 50--59, October 2014.

\bibitem{DBLP:conf/icdcs/Bouzid0T13}
Zohir Bouzid, Shantanu Das, and S{\'{e}}bastien Tixeuil.
\newblock Gathering of mobile robots tolerating multiple crash faults.
\newblock In {\em Proc. 33rd {IEEE} Intl. Conf. on Distributed Computing
  Systems {(ICDCS)}}, pages 337--346, July 2013.

\bibitem{DBLP:conf/algosensors/CiceroneSN14}
Serafino Cicerone, Gabriele~Di Stefano, and Alfredo Navarra.
\newblock Minimum-traveled-distance gathering of oblivious robots over given
  meeting points.
\newblock In {\em Proc. 10th Intl. Symp. on Algorithms and Experiments for
  Sensor Systems, Wireless Networks and Distributed Robotics, {(ALGOSENSORS)},
  Revised Selected Papers}, pages 57--72, September 2014.

\bibitem{DBLP:journals/siamcomp/CohenP05}
Reuven Cohen and David Peleg.
\newblock Convergence properties of the gravitational algorithm in asynchronous
  robot systems.
\newblock {\em {SIAM} J. Comput.}, 34(6):1516--1528, 2005.

\bibitem{CourtieuRTU15}
Pierre Courtieu, Lionel Rieg, S{\'{e}}bastien Tixeuil, and Xavier Urbain.
\newblock Impossibility of gathering, a certification.
\newblock {\em Inf. Process. Lett.}, 115(3):447--452, 2015.

\bibitem{CourtieuRTU16}
Pierre Courtieu, Lionel Rieg, S{\'{e}}bastien Tixeuil, and Xavier Urbain.
\newblock Certified universal gathering in $\mathbb{R}^2$ for oblivious mobile
  robots.
\newblock In {\em Proc. 30th Intl. Symp. on Distributed Computing {(DISC)}},
  pages 187--200, September 2016.

\bibitem{0001FPSY16}
Shantanu Das, Paola Flocchini, Giuseppe Prencipe, Nicola Santoro, and Masafumi
  Yamashita.
\newblock Autonomous mobile robots with lights.
\newblock {\em Theor. Comput. Sci.}, 609:171--184, 2016.

\bibitem{DBLP:conf/wdag/DefagoGMP06}
Xavier D{\'{e}}fago, Maria Gradinariu, St{\'{e}}phane Messika, and
  Philippe~Raipin Parv{\'{e}}dy.
\newblock Fault-tolerant and self-stabilizing mobile robots gathering.
\newblock In {\em Proc. 20th Intl. Symp. on Distributed Computing {(DISC)}},
  pages 46--60, September 2006.

\bibitem{DevismesLPRT12}
St{\'{e}}phane Devismes, Anissa Lamani, Franck Petit, Pascal Raymond, and
  S{\'{e}}bastien Tixeuil.
\newblock Optimal grid exploration by asynchronous oblivious robots.
\newblock In {\em Proc. 14th Intl. Symp. on Stabilization, Safety, and Security
  of Distributed Systems {(SSS)}}, pages 64--76, October 2012.

\bibitem{DoanBO16}
Ha~Thi~Thu Doan, Fran{\c{c}}ois Bonnet, and Kazuhiro Ogata.
\newblock Model checking of a mobile robots perpetual exploration algorithm.
\newblock In {\em Proc. 6th Intl. Workshop on Structured Object-Oriented Formal
  Language and Method {(SOFL+MSVL)}, Revised Selected Papers}, pages 201--219,
  November 2016.

\bibitem{Doan0017}
Ha~Thi~Thu Doan, Fran{\c{c}}ois Bonnet, and Kazuhiro Ogata.
\newblock Model checking of robot gathering.
\newblock In {\em Proc. 21st Intl. Conf. on Principles of Distributed Systems,
  {(OPODIS)}}, pages 12:1--12:16, December 2017.

\bibitem{2019Flocchini}
Paola Flocchini, Giuseppe Prencipe, and Nicola~Santoro (Eds.).
\newblock {\em Distributed Computing by Mobile Entities-Current Research in
  Moving and Computing}.
\newblock Lecture Notes in Computer Science, 11340. Springer, 2019.

\bibitem{2012Flocchini}
Paola Flocchini, Giuseppe Prencipe, and Nicola Santoro.
\newblock {\em Distributed Computing by Oblivious Mobile Robots}.
\newblock Synthesis Lectures on Distributed Computing Theory. Morgan {\&}
  Claypool Publishers, 2012.

\bibitem{DBLP:journals/tcs/FlocchiniPSW05}
Paola Flocchini, Giuseppe Prencipe, Nicola Santoro, and Peter Widmayer.
\newblock Gathering of asynchronous robots with limited visibility.
\newblock {\em Theor. Comput. Sci.}, 337(1-3):147--168, 2005.

\bibitem{DBLP:journals/tcs/FlocchiniSVY16}
Paola Flocchini, Nicola Santoro, Giovanni Viglietta, and Masafumi Yamashita.
\newblock Rendezvous with constant memory.
\newblock {\em Theor. Comput. Sci.}, 621:57--72, 2016.

\bibitem{HeribanDT18}
Adam Heriban, Xavier D{\'{e}}fago, and S{\'{e}}bastien Tixeuil.
\newblock Optimally gathering two robots.
\newblock In {\em Proc. 19th Intl. Conf. on Distributed Computing and
  Networking, {ICDCN}}, pages 3:1--3:10, January 2018.

\bibitem{DBLP:journals/siamcomp/IzumiSKIDWY12}
Taisuke Izumi, Samia Souissi, Yoshiaki Katayama, Nobuhiro Inuzuka, Xavier
  D{\'{e}}fago, Koichi Wada, and Masafumi Yamashita.
\newblock The gathering problem for two oblivious robots with unreliable
  compasses.
\newblock {\em {SIAM} J. Comput.}, 41(1):26--46, 2012.

\bibitem{MilletPST14}
Laure Millet, Maria Potop{-}Butucaru, Nathalie Sznajder, and S{\'{e}}bastien
  Tixeuil.
\newblock On the synthesis of mobile robots algorithms: The case of ring
  gathering.
\newblock In {\em Proc. 16th Intl. Symp. on Stabilization, Safety, and Security
  of Distributed Systems {(SSS)}}, pages 237--251, September 2014.

\bibitem{DBLP:conf/opodis/OkumuraWD18}
Takashi Okumura, Koichi Wada, and Xavier D{\'{e}}fago.
\newblock Optimal rendezvous $\mathcal{L}$-algorithms for asynchronous mobile
  robots with external-lights.
\newblock In {\em Proc. 22nd Intl. Conf. on Principles of Distributed Systems
  {(OPODIS)}}, pages 24:1--16, December 2018.

\bibitem{OkumuraWK17}
Takashi Okumura, Koichi Wada, and Yoshiaki Katayama.
\newblock Brief announcement: Optimal asynchronous rendezvous for mobile robots
  with lights.
\newblock In {\em Proc. 19th Intl. Symp. on Stabilization, Safety, and Security
  of Distributed Systems {(SSS)}}, pages 484--488, November 2017.

\bibitem{DBLP:conf/birthday/OkumuraWK18}
Takashi Okumura, Koichi Wada, and Yoshiaki Katayama.
\newblock Rendezvous of asynchronous mobile robots with lights.
\newblock In {\em Adventures Between Lower Bounds and Higher Altitudes - Essays
  Dedicated to Juraj Hromkovi{\v{c}} on the Occasion of His 60th Birthday},
  pages 434--448, 2018.

\bibitem{PotopSTU19}
Maria Potop{-}Butucaru, Nathalie Sznajder, S{\'{e}}bastien Tixeuil, and Xavier
  Urbain.
\newblock Formal methods for mobile robots.
\newblock In Paola Flocchini, Giuseppe Prencipe, and Nicola Santoro, editors,
  {\em Distributed Computing by Mobile Entities, Current Research in Moving and
  Computing.}, volume 11340 of {\em LNCS}, pages 278--313. Springer, 2019.

\bibitem{RubinZMA15}
Sasha Rubin, Florian Zuleger, Aniello Murano, and Benjamin Aminof.
\newblock Verification of asynchronous mobile-robots in partially-known
  environments.
\newblock In {\em Proc. 18th Intl. Conf. on Principles and Practice of
  Multi-Agent Systems {(PRIMA)}}, pages 185--200, October 2015.

\bibitem{SangnierSPT17}
Arnaud Sangnier, Nathalie Sznajder, Maria Potop{-}Butucaru, and S{\'{e}}bastien
  Tixeuil.
\newblock Parameterized verification of algorithms for oblivious robots on a
  ring.
\newblock In Daryl Stewart and Georg Weissenbacher, editors, {\em 2017 Formal
  Methods in Computer Aided Design, {FMCAD} 2017, Vienna, Austria, October 2-6,
  2017}, pages 212--219. {IEEE}, 2017.

\bibitem{DBLP:journals/taas/SouissiDY09}
Samia Souissi, Xavier D{\'{e}}fago, and Masafumi Yamashita.
\newblock Using eventually consistent compasses to gather memory-less mobile
  robots with limited visibility.
\newblock {\em {TAAS}}, 4(1):9:1--9:27, 2009.

\bibitem{SuzukiY99}
Ichiro Suzuki and Masafumi Yamashita.
\newblock Distributed anonymous mobile robots: Formation of geometric patterns.
\newblock {\em {SIAM} J. Comput.}, 28(4):1347--1363, 1999.

\bibitem{Viglietta13}
Giovanni Viglietta.
\newblock Rendezvous of two robots with visible bits.
\newblock In {\em Proc. 9th Intl. Symp. on Algorithms and Experiments for
  Sensor Systems, Wireless Networks and Distributed Robotics, {(ALGOSENSORS)}},
  pages 291--306, September 2013.

\end{thebibliography}

}

\clearpage
\appendix
\section{Appendix}
\subsection{Reminder: Robot System Model}


An execution of robot $r$ is defined as a possibly infinite sequence of activation cycles of $r$.
A global execution is a sequence of events on both robots, such that the event of each robot $r$ is a robot execution of $r$, and the interleaving of events follows the rules of the activation model:
\begin{description}
\item[\CENTRALIZED]
	The activation cycle of a robot is atomic. In other words, a single robot is activated at a time and executes a full activation cycle each time (see Figure~\ref{fig:sched:centralized}).
\item[\FSYNC]
	The activation cycles of both robots are executed simultaneously. Equivalently, the robots always follows the following atomic sequence: each robot executes a \LOOK event and then, in turn, each robot sequentially execute \COMPUTE, \BEGMOVE, and \ENDMOVE. This is depicted in Figure~\ref{fig:sched:FSYNC}.
\item[\SSYNC]
	Activation cycles can be either centralized or combined (as \FSYNC).	
\item[\ASYNC]
	Each event is atomic but there is no atomicity between events (see Figure~\ref{fig:sched:ASYNC}).
\item[\ASYNCLC]
	Same as \ASYNC, but the \LOOK and \COMPUTE events execute atomically.
\item[\ASYNCMOVE]
	Same as \ASYNC, but the \BEGMOVE and \ENDMOVE events execute atomically.
\end{description}

In this paper, we assume that robots are always activable and that the scheduling is fair. Consequently, both robots are activated infinitely many times.

\subsection{Proving lemmas and theorems of Section~\ref{sec:theorems}}

\begin{proof}[Proof: Lemma \ref{Lem:ComToSim}]
Since the set of initial configurations allowed for \emph{simple} self-stabilization is a subset of the one allowed for \emph{complete} self-stabilization, if all \emph{complete} initial configurations lead to a successful execution, then all \emph{simple} starting execution also do.
\end{proof}

\begin{proof}[Proof: Lemma \ref{Lem:SimToComSSYNC}]
Under the \FSYNC, \CENTRALIZED and \SSYNC scheduler, any \emph{complete} initial configuration becomes a \emph{simple} initial configuration after all robots finish their current cycle.
\end{proof}

\begin{proof}[Proof: Theorem \ref{Thm:ObliviousMemory} : Oblivious Robots]
Let us assume two robots $A$ and $B$ running the ToOther algorithm, \emph{i.e.}, the target is always the other robot. Let us also assume an initial configuration where $A$ is in the \WAIT phase and $B$ is in the \COMPUTE phase and has targeted point $P_1$ such that $|AP_1|=|BP_1|=|AB|$. In other words, $ABP_1$ is an equilateral triangle. Note that this is a \emph{complete}, but not \emph{simple} initial configuration.

We first activate $A$, which is now in its \COMPUTE phase and targets the current location of $B$. We then activate $B$ which starts moving towards $P_1$. We then activate $B$ again as it reaches $P_1$ and is now in its \WAIT phase.

This current configuration is identical to the initial configuration. Thus, we have an execution that repeats infinitely often while never being reachable from both robots starting in the \WAIT phase, because starting from the \WAIT phase cannot yield a target outside the $ [AB] $ segment

While this is the simplest example of the behaviour that we could devise, it should be noted that a similar execution could be achieved using a move-to-half algorithm, the only difference being the $ABP_1$ would be shrinking with each activation. Similarly, any initial configuration that included a target outside of the line $(AB)$ could lead to a similar execution, given the right algorithm and ASYNC scheduling. 

Figure \ref{fig:Oblivious} visually shows the execution, with the colored cross showing the target of the robot.

\begin{figure}
	\centering
	\includegraphics[width=\columnwidth]{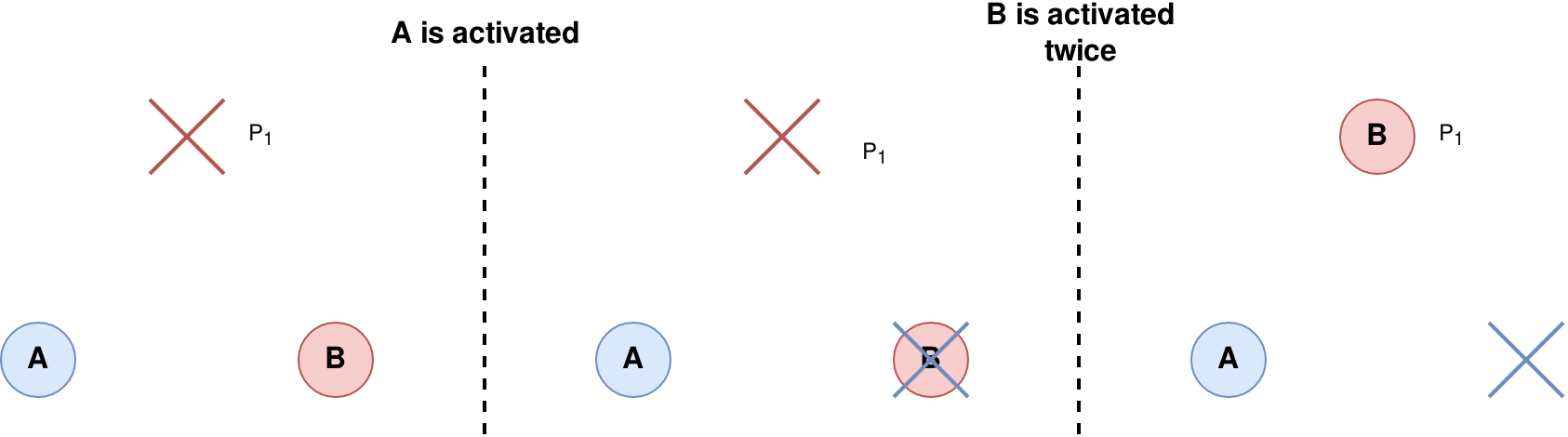}
	\caption{Proof for Oblivious \emph{Complete} Self-Stabilization}
	\label{fig:Oblivious}
\end{figure}

\end{proof}

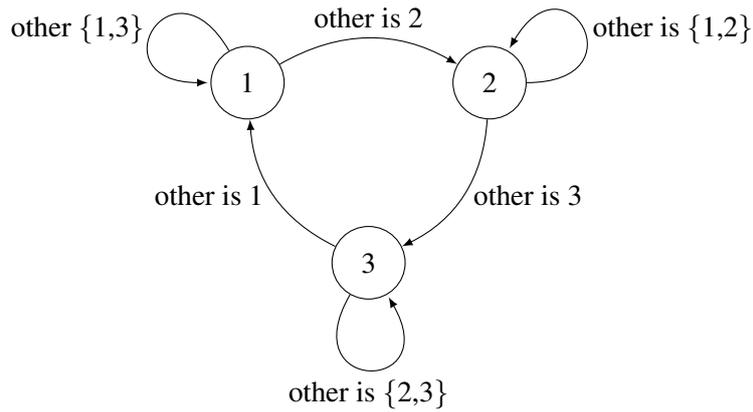
\begin{figure}
	\centering
	\begin{tikzpicture}
		\node[state] (B) {1};
		\node[state] (W) [right of=B] {2};	
		\node[state] (R) at (0.1\columnwidth,-0.15\columnwidth) {3};	
		\path[->] (B) edge[bend left] node[above,align=center]{other is 2} (W);
		\path[->] (W) edge[bend left] node[right,align=center]{other is 3} (R);
		\path[->] (R) edge[bend left] node[left,align=center]{other is 1} (B);

		\path[->] (W) edge[out=0,in=60,loop] node[right,align=center]{other is \{1,2\}} (W);
		\path[->] (B) edge[out=120,in=180,loop] node[left,align=center]{other \{1,3\}} (B);
    	\path[->] (R) edge[out=240,in=300,loop] node[below,align=center]{other is \{2,3\}} (R);
	\end{tikzpicture}
	\caption{Algorithm for the Luminous \emph{Complete} Self-Stabilization}
	\label{fig:colors}
\end{figure}

\begin{proof}[Proof: Theorem \ref{Thm:ObliviousMemory}: Luminous Robots]
Let us assume the three-color algorithm in Figure~\ref{fig:colors}, and an initial configuration of two robots $A$ and $B$, where $A$ starts in color~1, in the \WAIT phase, and $B$ in color~2, in the \COMPUTE phase, with~3 as a pending color.

We then follow the execution described in Figure \ref{fig:ColExec}. We see that the last configuration is identical to the first one with $A$ and $B$ swapped, which means the execution can be executed infinitely often. 

\begin{figure}
	\centering
	\includegraphics[width=.9\columnwidth]{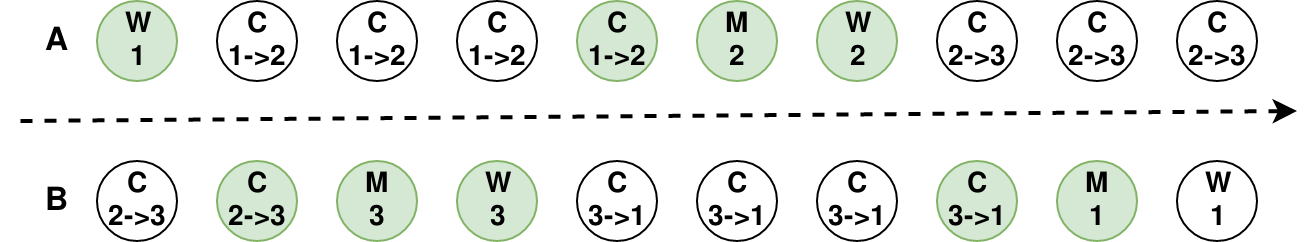}
	\caption{Execution for the Luminous proof of Theorem \ref{Thm:ObliviousMemory}. W indicates the \WAIT phase, C \COMPUTE, and M \MOVE. 1, 2 and 3 indicate the color and, if applicable, the arrow indicates a pending color. The activated robot is highlighted in green.}
	\label{fig:ColExec}
\end{figure}

Let us now prove that this execution cannot be reached from a \WAIT/\WAIT initial configuration. 

A three-color algorithm allows for 6 different color combinations.
\begin{itemize}
    \itemsep0em 
    \item Starting from \{1,2\}, robot $B$ is stuck in color 2 and robot $A$ turns to color 2. So \{1,2\} leads to \{2,2\}
    \item Starting from \{2,3\}, robot $B$ is stuck in color 3 and robot $A$ turns to color 3. So \{2,3\} leads to \{3,3\}
    \item Starting from \{3,1\}, robot $B$ is stuck in color 1 and robot $A$ turns to color 1. So \{3,1\} leads to \{1,1\}
    \item Starting from {1,1}, no robot can change color.
    \item Starting from {2,2}, no robot can change color.
    \item Starting from {3,3}, no robot can change color.
\end{itemize}

We now see that, if starting from \WAIT/\WAIT, no cycle of changing colors can be reached. Therefore, the previously described execution cannot be reached either.
\end{proof}

\begin{Lemma}[Condition of rendezvous in \ASYNC]
\label{Lem:RDVcond}
To achieve rendezvous in \ASYNC, it is necessary that the targets of both robots are eventually always identical.
\end{Lemma}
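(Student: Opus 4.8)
The plan is to argue by contradiction: suppose an algorithm achieves rendezvous in \ASYNC, yet there is an execution in which the targets of $A$ and $B$ are \emph{not} eventually always identical. I want to use this to build (or exhibit the possibility of) a fair \ASYNC execution that never gathers, contradicting correctness. The key observation is that in \ASYNC the scheduler controls the delay between \LOOK, \COMPUTE, and the \MOVE phases independently for each robot, so a target computed from an outdated snapshot can be "held" and executed much later.

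First I would set up the two robots symmetrically. Since there are only two robots with no shared coordinate system, whenever $A$'s snapshot and $B$'s snapshot are taken in a configuration that is symmetric (same pair of colors, same distance), the two robots compute targets that are mirror images of each other through the midpoint of $[AB]$; in particular either both target \STAY, or both target the midpoint (\HALF), or both target the other robot (\OTHER) — by Lemma~\ref{Lem:Mov} these are the only possibilities. In all three of these symmetric cases the two targets, \emph{when executed simultaneously}, are in fact "identical" in the sense of leading to a consistent outcome (both stay, both reach the midpoint hence gather, or both reach the other's position hence gather). So a configuration in which the two pending targets are genuinely incompatible must arise from \emph{asymmetry} introduced by the scheduler: one robot has acted on a snapshot that the other robot has since invalidated by moving or by changing color.

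Next I would show that if such incompatible pending targets can occur once, the scheduler can arrange for them to recur forever. Concretely: let $A$ hold a stale target while $B$ completes a move/color change; then let $B$ \LOOK at the new configuration and compute; by the time $A$ finally executes its stale target, $B$'s target is again based on information $A$'s move invalidates, and the roles can be swapped — this is exactly the mechanism exploited in the proof of Theorem~\ref{Thm:ObliviousMemory} (the equilateral-triangle construction for \ToOther, and its color analogue). The distance either stays bounded away from $0$ (as in the \STAY/\ToOther triangle case) or, if it shrinks, the pending targets remain perpetually incompatible so the robots perpetually overshoot or fall short of each other, and one then checks the limiting configuration cannot be \emph{gathered}-and-\emph{staying}. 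Hence the execution is fair (each robot is activated infinitely often) but gathering never holds forever, contradicting that the algorithm solves rendezvous in \ASYNC.

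The main obstacle I expect is the case analysis in the shrinking-distance situation: showing rigorously that persistently non-identical targets genuinely preclude convergence-and-stabilization, rather than merely slowing it down. One must rule out, e.g., that the mismatched targets could "accidentally" both resolve to \SAME in the limit. I would handle this by invoking Lemma~\ref{Lem:Conv} (the distance either becomes eventually constant or converges to $0$) together with the movement-resolution rules of Table~\ref{tab:movement}: if the targets are not eventually identical, the scheduler can always force a \MISS at the decisive moment, so the "gathered and then staying" configuration — which requires both robots to simultaneously have a \STAY (or consistent \OTHER) pending move while coincident — is never reached. This pins down the necessity claim.
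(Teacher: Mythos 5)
Your proposal heads in a much harder direction than necessary and, as written, does not close. The paper's proof is a two-line observation about the suffix of the execution \emph{after} gathering: since rendezvous requires the two robots to reach the same point and remain there forever, consider any moment after they are gathered; if at such a moment their targets differ, then at least one robot's target is a point distinct from the common location, and the (fair, adversarial) scheduler can let that robot execute its pending move, so the positions become different again and rendezvous is not actually achieved --- contradiction. Hence in every execution of a correct algorithm the targets must be \emph{eventually} always identical; nothing needs to be said about the pre-gathering phase. Your plan instead tries to prove the contrapositive by constructing, for an \emph{arbitrary} algorithm, an adversarial schedule that perpetuates incompatible targets and prevents convergence. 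That is both beside the point (the necessity claim only constrains the eventual suffix) and not actually carried out: the ``main obstacle'' you identify --- rigorously ruling out that persistently mismatched targets might still converge and stabilize --- is exactly the part you leave to an informal appeal to Lemma~\ref{Lem:Conv} and Table~\ref{tab:movement}, and the constructions you invoke (the equilateral-triangle schedule from Theorem~\ref{Thm:ObliviousMemory}) are tailored to specific algorithms, so they cannot establish a statement quantified over all correct algorithms.

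There is also a factual slip in your symmetric-case analysis: if both robots simultaneously execute an \OTHER move they swap positions rather than gather (this is precisely why the ToOther baseline fails under \FSYNC in the paper's results), so simultaneous mirror-image targets are not ``consistent'' in the sense you claim. None of this machinery is needed once you observe that the lemma is really a statement about what must hold once the robots are, and must forever remain, gathered.
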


\begin{proof}[Proof: Lemma \ref{Lem:RDVcond}]
Let us assume that two robots are currently gathered, that is they occupy the same position, and remain there forever.

Let us assume for the purpose of contradiction that at some point during the execution, their targets become different. Then, there exists an execution where their positions become different. Therefore rendezvous is not actually achieved in this execution, a contradiction.
\end{proof}

\begin{Lemma}[Pending Colors]
\label{Lem:PenCol}
    An execution that is not \emph{simple-reachable} cannot contain a configuration in which all robots have identical \emph{pending} and \emph{visible} colors.
\end{Lemma}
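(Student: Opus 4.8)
The plan is to argue by contraposition: assume the execution $E$ \emph{does} contain a configuration $C$ in which every robot's pending color coincides with its visible color, and show that $E$ is \emph{simple-reachable}. I would first record a convenient reformulation: since the scheduler may realise any fair continuation, $E$ is simple-reachable as soon as \emph{some} configuration occurring in $E$ is reachable from a \emph{simple} initial configuration — one glues the finite reaching prefix in front of the matching suffix of $E$ to obtain a bona fide simple execution that shares that suffix with $E$, and fairness is preserved because only finitely many transitions are prepended to an already fair tail. So it is enough to exhibit a simple initial configuration from which $C$ is reachable.

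The configuration $C_0$ I would use is a ``rewound'' copy of $C$: put both robots in the \WAIT phase, place each robot at the position it occupied at the start of its current activation cycle (for a robot that has not yet performed its \ENDMOVE this is already its position in $C$), and let each robot display the color it held when it \emph{entered} that cycle. The hypothesis ``pending color equals visible color'' is exactly what makes this last choice well defined and consistent: a robot whose pending and visible colors agree is not in the middle of an effective color change, so the color it shows in $C$ is the color it entered its current cycle with; no color beyond those already visible in $C$ is needed, and nothing about the colors is lost by rewinding. Since both robots are in \WAIT, $C_0$ is a legitimate simple initial configuration.

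It then remains to replay, from $C_0$, the cycles that were in progress at $C$: schedule each robot through \LOOK, \COMPUTE, \BEGMOVE and, where applicable, \ENDMOVE, interleaving these events in the same relative order in which $A$ and $B$ performed them while their cycles unfolded in $E$. The one substantive point is that the pending moves carried in $C$ must be re-derived by these fresh \COMPUTE steps, since a computed move is a function of the snapshot rather than a free parameter. Here I would invoke Theorem~\ref{Thm:BinSpace}: once the algorithm uses only the three required movements, a robot's output depends solely on the ordered pair of visible colors and on whether the robots are gathered, so it suffices to order the two \LOOK events so that each robot's fresh snapshot agrees with the one it actually took in $E$ on these finitely many ingredients; this forces the same (no-op) color decision and the same movement type, hence, after movement resolution, the same target. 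Theorem~\ref{Thm:targetSS} is what guarantees that the \WAIT-phase state carrying such a target still lies within the simple regime, so the replay legitimately reaches $C$ from $C_0$, closing the reduction.

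The real work, and the main obstacle, sits in this last step: arranging the interleaving so that \emph{both} fresh snapshots match their originals simultaneously, over every combination of phases $A$ and $B$ can exhibit at $C$, and — in the sub-cases where a robot changed color or moved during an \emph{earlier} cycle, so that ``start of the current cycle'' is not yet a color-stable point — pushing the rewind a few cycles further into the past, which turns the single-shot simulation into a short induction along $E$. That bookkeeping is finite and mechanical; the conceptual content — that an outstanding pending color is the \emph{only} state an oblivious luminous pair can retain past a configuration that a \WAIT/\WAIT start cannot recreate — is entirely in the first two steps.
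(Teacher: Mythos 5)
Your core idea --- rewind each robot to the start of its in-progress cycle, start both in \WAIT with the colors they entered those cycles with, and replay the cycles in the original interleaving --- is exactly the idea behind the paper's own proof; the paper simply carries it out as an explicit case analysis on the pair of phases (\WAIT/\WAIT, \WAIT/\COMPUTE, \COMPUTE/\MOVE, and so on), which is precisely the ``finite and mechanical bookkeeping'' you defer in your last paragraph. So the approach matches, but two points in your write-up are genuinely problematic.

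First, you invoke Theorem~\ref{Thm:targetSS} to justify that the replayed \WAIT-phase states ``lie within the simple regime.'' In the paper, Theorem~\ref{Thm:targetSS} is itself derived from Lemma~\ref{Lem:RDVcond} together with the present lemma, so that citation is circular. Fortunately it is also unnecessary: your $C_0$ has both robots in \WAIT with no pending targets, and every target arising during the replay is produced by a fresh \COMPUTE from a legitimate snapshot, hence is simple by construction. Second, your supporting claim that ``the color a robot shows in $C$ is the color it entered its current cycle with'' is false for a robot that has already executed the \COMPUTE of its current cycle and changed color there: such a robot satisfies the hypothesis (its pending color is resolved, hence identical to the visible one) yet its visible color in $C$ differs from its entry color. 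The hypothesis excludes an \emph{outstanding} color change, not a \emph{completed} one. Your construction --- seed $C_0$ with the entry colors --- is still the right move, but it cannot be read off from $C$ alone as you claim, and it is exactly in these sub-cases that the simultaneous snapshot-matching you postpone becomes the substance of the argument; that is why the paper spells out the phase-by-phase cases rather than leaving them as bookkeeping.
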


\begin{proof}[Proof: Lemma \ref{Lem:PenCol}]
Let us assume a configuration in which all robots have identical pending and visible colors. 

In this configuration, robots can only be in \WAIT, \MOVE or \COMPUTE with a pending color identical to the visible one.
If both robots are in \WAIT, then this can trivially be reached from a \WAIT/\WAIT configuration.
If both robots are in \COMPUTE, then the following activation turns this configuration into a \COMPUTE/\MOVE configuration with no change of color.
If both robots are in \MOVE, then the following activation turns this configuration into a \WAIT/\MOVE configuration.

If one robot is in \WAIT and the other in \COMPUTE, then, after the second robot has reached \WAIT, the configuration can be reached by starting the second robot in its current color and the first robot in its previous color and activating it. 
The same reasoning holds if a robot is in \WAIT and the second in \MOVE or \COMPUTE and \MOVE.
\end{proof}

Based on Lemma~\ref{Lem:RDVcond} and~\ref{Lem:PenCol} above, we can now prove Theorem~\ref{Thm:targetSS}.
\begin{proof}[Proof: Theorem \ref{Thm:targetSS}]
In the case of oblivious robots, we know that rendezvous is not possible in \SSYNC, hence in \ASYNC.
So, proving that an oblivious algorithm is \emph{simply} self-stabilizing, means that it is also \emph{completely} self-stabilizing.

For a luminous algorithm to be \emph{simply} self-stabilizing, the condition must be true for any pair of colors. To obtain an execution similar to the one described in Figure~\ref{fig:Oblivious}, it is necessary for robots to always move to a target that depends on an outdated position of the other robot.
If this behavior is allowed, then gathering is impossible, regardless of the initial target, as robots never have identical targets.
\end{proof}

\begin{proof}[Proof : Theorem \ref{Thm:SoNice}]
We first prove that complete executions that are not simple-reachable must contain pending colors.

We know that any \emph{complete} initial configuration must contain at least one pending color different from the current one to be non simple-reachable. Let us first consider the case where one robot $A$ has different pending and visible colors $A1$ and $A0$, and robot $B$ has identical colors $B0$.

We first show that the first activated robot must be $B$, as we see in Figure~\ref{fig:exec2}.

\begin{figure}[htb]
	\centering
	\begin{tabular}{c|c||c|c}
	    \multicolumn{2}{c}{A} & \multicolumn{2}{c}{B}
	  \\\hline
	  Computed & Visible & Computed & Visible
	  \\\hline
	  $A1$ & $A0$ & $B0$ & $B0$
	  \\
	  $A2(B0)$ & $A1$ & $B0$ & $B0$
	  \\\hline
	\end{tabular}
    \hfill
    \begin{tabular}{c|c||c|c}
	    \multicolumn{2}{c}{A} & \multicolumn{2}{c}{B}
	  \\\hline
	  Computed & Visible & Computed & Visible
	  \\\hline
	  $A1$ & $A1$ & $B0$ & $B0$
	  \\
	  $A2(B0)$ & $A1$ & $B0$ & $B0$
	  \\\hline
	\end{tabular}
	\caption{Activating robot $A$ first (left) leads to a \emph{simple-reachable} configuration (right)}
	\label{fig:exec2}
\end{figure}

We show a possible memory execution in Figure~\ref{fig:exec1}

\begin{figure}[htb]
	\centering
	\begin{tabular}{c|c||c|c}
	    \multicolumn{2}{c}{A} & \multicolumn{2}{c}{B}
	  \\\hline
	  Computed & Visible & Computed & Visible
	  \\\hline
	  $A1$ & $A0$ & $B0$ & $B0$
	  \\
	  $A1$ & $A0$ & $B1(A0)$ & $B0$
	  \\
	  $A2(B0)$ & $A1$ & $B1(A0)$ & $B0$
	  \\
	  $A2(B0)$ & $A1$ & $B2(A1)$ & $B1(A0)$
	  \\
	  $A3(B1(A0))$ & $A2(B0)$ & $B2(A1)$ & $B1(A0)$
	  \\\hline
	\end{tabular}
	\caption{A possible memory execution, note the continuing dependency on colors $A0$, $B0$, and $A1$}
	\label{fig:exec1}
\end{figure}

We now note that, if the algorithm follows the identical color condition, in order for $B1$ to be different from $B0$, we require $A0$ to be identical to $B0$. If we do not force change in color and have $B1 = B0$, the configuration does not change and the first activation of $A$ leads to the counter example shown in Figure~\ref{fig:exec2}.
Because of this, the scheduler now has two choices : Either activate $A$ or $B$. These two executions are shown if Figure~\ref{fig:exec3} and~\ref{fig:exec4}.

\begin{figure}
	\centering
	\hfill
	\begin{minipage}{0.485\linewidth}
	\begin{tabular}{c|c||c|c}
	    \multicolumn{2}{c}{A} & \multicolumn{2}{c}{B}
	  \\\hline
	  Computed & Visible & Computed & Visible
	  \\\hline
	  $A1$ & $A0$ & $B0$ & $B0$
	  \\
	  $A1$ & $A0$ & $B1(A0)$ & $A0$
	  \\
	  $A2(A0)$ & $A1$ & $B1(A0)$ & $A0$
	  \\\hline
	\end{tabular}
	\caption{After activating $B$ then $A$}
	\label{fig:exec3}
	\end{minipage}
    \hfill
	\begin{minipage}{0.485\linewidth}
	\begin{tabular}{c|c||c|c}
	    \multicolumn{2}{c}{A} & \multicolumn{2}{c}{B}
	  \\\hline
	  Computed & Visible & Computed & Visible
	  \\\hline
	  $A1$ & $A0$ & $B0$ & $B0$
	  \\
	  $A1$ & $A0$ & $B1(A0)$ & $A0$
	  \\
	  $A1$ & $A0$ & $B2(A0)$ & $B1(A0)$
	  \\\hline
	\end{tabular}
	\caption{After activating $B$ twice}
	\label{fig:exec4}
	\end{minipage}
	\hfill
\end{figure}

In the case shown in Figure~\ref{fig:exec3}, following the condition leads to $A1$ being identical to $A0$, which is a contraction.
In the case shown in Figure~\ref{fig:exec4}, following the condition leads to $B1$ being identical to $A0$, which we know leads to a \emph{simple-reachable} configuration.

A similar reasoning can be held for the case where both robots have pending colors. 
\end{proof}

\begin{proof}[Proof: Lemma \ref{Lem:Mov}]
First, these three motions are obviously sufficient since they are the only ones used by both Heriban\etal~\cite{HeribanDT18} and Viglietta~\cite{Viglietta13} with two and three colors, respectively.

Next, we prove that it is necessary to use all of these motions to achieve rendezvous. 

Consider the case where both robots are at distinct positions, anonymous and have the same color. As proven by Viglietta~\cite{Viglietta13} in Proposition~4.1: ``We may assume that both robots get isometric snapshots at each cycle, so they both turn the same colors, and compute destination points that are symmetric with respect to their midpoint. If they never compute the midpoint and their execution is rigid and fully synchronous, they never gather.'' Therefore, Move to Half is necessary.

Similarly, consider now a case where snapshots are not isometric because of different colors. Let us assume that their algorithm makes them target any point between them, but not their own positions.

Because of the case where they are both activated at the same time, their targets need to be identical to gather. 
We model this as $\dfrac{D}{x}$ for robot $A$ and $\dfrac{(x-1)D}{x}$ for robot $B$, with $D$ the distance between $A$ and $B$ and $x$ is a real positive number. This is mandatory in the case where $A$ and $B$ are activated at the same time.

However, if $A$ and $B$ are activated sequentially for a full cycle and $x \neq 1$, then $A$ and $B$ have different targets. As long as no motion where $x=1$ exists, no rendezvous can happen if $A$ and $B$ are separated. When $x=1$, $A$ uses Move to Other and B uses Stay and rendezvous can be achieved. Therefore, M2O and Stay are both necessary.
\end{proof}

\begin{Definition}[Static Execution]
We define a static execution as an execution during which the distance between $A$ and $B$ does not decrease.
\end{Definition}

\begin{proof}[Proof: Lemma \ref{Lem:Conv}]
We only consider three types of motion:\nolinebreak[4]
\begin{itemize}\itemsep0em %
\item Stay (\STAY)
\item Move to the midpoint (\HALF)
\item Move to the other robot (\OTHER)
\end{itemize}

First, it is trivial to see that, using these three types of motion, the distance between $A$ and $B$ cannot increase if both robot start in a \WAIT phase.

Next, we look at what happens after each robot completes at least one full cycle. The resulting distance is presenting in Table~\ref{table:distance}.

\begin{table}
    \centering
    \caption{Distance after a full cycle of $A$ and $n$ full cycles of $B$ with an initial distance of $X$}
    \label{table:distance}
    \medskip
    \begin{tabular}{c|c|c|c}
        & $A$ has a pending \STAY & $A$ has a pending \HALF & $A$ has a pending \OTHER
        \\\hline
        $B$ executes $n$ \STAY
        & $X$ & $\left[\dfrac{X}{2} , X - \delta \right]$ & $\left[0 , X - \delta \right]$
        \\ 
        $B$ executes $n$ M2H
        & $\left[ \dfrac{X}{2^n} , X - n \delta \right]$ & $\left[ 0,X - (n+1) \delta \right]$ & $\left[0 , X - \dfrac{X}{2^n}\right]$
        \\ 
        $B$ executes $n$ M2O
        & $\left[ 0 , X - n \delta \right]$ & $\left[ 0, \dfrac{X}{2} \right]$ & $\left[0,X\right]$
        \\ \hline
    \end{tabular}
\end{table}

We note that only three static executions exist: 
\begin{itemize}
    \item No robot moves,
    \item Robot $A$ has a pending \OTHER and robot $B$ reaches robot $A$, either by: 
    \begin{itemize}
        \item executing at least one \OTHER,
        \item executing an infinite number of \HALF,
    \end{itemize}
\end{itemize}

We also can notice that all other cases converge towards a distance of 0. This implies that any algorithm which includes only these three motions and no static execution reduces the distance between the two robots towards zero. 
\end{proof}

\begin{proof}[Proof: Theorem \ref{Thm:rigid}]

For the purpose of contradiction, let us assume the existence of a completely self-stabilizing algorithm that achieves rendezvous with rigid moves, but not with non-rigid moves. This implies that there exists a non-rigid execution that does not lead to rendezvous.

Since the algorithm works in the rigid case, we know that it cannot include any of the static executions we described in Lemma~\ref{Lem:Conv} since they would also fail with rigid motion. 
Since these executions are not part of the algorithm, in the non-rigid case, any execution of the algorithm necessarily decreases the distance towards zero. 
Therefore the distance between two robots $A$ and $B$ eventually becomes less than $\delta$, with $\delta$ being the distance between $A$ and $B$ below which the behavior becomes rigid.

When that is the case, the behavior of the algorithm is strictly the same as the rigid-motion behavior. Since this execution does not achieve rendezvous, this means that there exists a state, which is part of the states of the rigid-motion behavior that does not achieve rendezvous. 
Hence, a contradiction.
\end{proof}

\begin{proof}[Proof: Theorem \ref{Thm:BinSpace}]

We know that a single line is sufficient to model the plane for the rendezvous problem in the general case. 
First, since robots have no common notion of length, the actual distance between the two robots, other than being gathered or not, cannot be used for deciding to move or change their color.
We assume robots only use the three required movements. Because the motion is rigid, robots always reach their destinations. Therefore, if an execution leads to rendezvous, changing the initial distance between the two robots does not change the outcome of the execution.
\end{proof}

\begin{proof}[Proof: Theorem \ref{Thm:sht}]
Consider an algorithm that achieves rendezvous with non-rigid moves when both robots start from color $C_0$, but fails to do so for any other initial color combination.
To achieve rendezvous, the algorithm must work for any initial distance between both robots, including a distance smaller than $\delta$. Therefore, we know that checking the rigid behavior for color $C_0$ is necessary to prove non-rigid behavior.

For instance, Vig2Cols \cite{Viglietta13} achieves rendezvous with rigid moves when starting from color \BLACK. However, we also know this algorithm to fail with non-rigid moves when starting with the same initial colors. Therefore, solely checking the rigid behavior would lead us to incorrectly consider Vig2Cols \cite{Viglietta13} to be a working non-rigid non-self-stabilizing algorithm. Hence, the condition is not sufficient. 
\end{proof}

This happens because, after starting from an arbitrary distance, we cannot make any assumption about which configuration the system is in when the distance between robots becomes smaller than $\delta$. This configuration may indeed depend on the initial distance and the activation schedule.

\begin{proof}[Proof: Theorem \ref{Thm:wada}]
In this particular case, we need to check three things: 
\begin{enumerate}
    \itemsep0em 
    \item That the rigid algorithm achieves rendezvous.
    \item That the farthest non-rigid algorithm leads to the closest.
    \item That the closest non-rigid algorithm leads to the rigid.
\end{enumerate}
The first point is proven in the rigid part of the paper. We now only need to prove the second and third points.

This is easily done by remembering that any algorithm using only the three moves either allows static executions or not. If it does, then the distance is never reduced, and checking at $3\delta$ is equivalent to any other distance. If it does not, then the distance is eventually reduced to zero and hence eventually enters the second behavior.  
Similarly, we check that the second behavior leads to rigid by reducing distance.
We only need to check that this decrease happens once for every initial configuration to ensure that distance is reduced towards rigid behavior.
\end{proof}

\begin{proof}[Proof: Theorem \ref{Thm:Scheduling}]
In the \FSYNC model, the \LOOK, \COMPUTE, \BEGMOVE, and \ENDMOVE phases of all robots are executed simultaneously. However, since \LOOK is a read-only operation, and \COMPUTE a write-only operation, with regards to color, these operations can be executed sequentially as long as no \LOOK happens after a \COMPUTE (\emph{i.e.}, all $read$ operations happen before the first $write$ operation).
A similar reasoning holds for the \BEGMOVE to \ENDMOVE being the beginning and end of a continuous write operation.
\end{proof}

\begin{proof}[Proof: Theorem \ref{Thm:Events}]
We consider two cases: 
\begin{enumerate}
\item In the first case, one event ($E1$) is a $read$ operation (\LOOK), and one event ($E2$) is a $write$ operation (\COMPUTE).
Since the same piece of information is being read and written at the same time, the result of the read operation cannot be determined. 
In the case of mobile robots, where the $write$ operation is a color transition from color $c_1$ to color $c_2$, we assume that the only colors that can be seen by the \LOOK are $c_1$ or $c_2$\footnote{This assumption is analogous to assuming regularity with registers.}. We then need to consider the case where the LOOK saw a $c_1$ (E1 then E2) and the case where it saw a $c_2$ (E2 then E1).

\item In any other case, since no read and write operation are happening at the same time, activating E1 and E2 simultaneously is identical to activating either E1 then E2, or E2 then E1.
\end{enumerate}
\vspace{-2\baselineskip}
\end{proof}

\begin{proof}[Proof: Theorem \ref{Thm:Fair}]
To properly limit the number of activations of the scheduler, we need to ensure that any change in the configuration made by $A$ that may impact  the snapshot of $B$ has been explored.
When performing its cycle, there are two such elements: $A$'s color, and the distance between $A$ and $B$.

Since there are only two distance states \{\SAME, \NEAR\}, a fixed number $N$ of colors, and it takes 4 activations to perform a cycle, limiting the fair scheduler to an 8$N$-fair scheduler, that is, a fair scheduler that can perform at most 8$N$ activations of robot $A$ between two activations of robot $B$ still ensures that every possible snapshot has been explored.
\end{proof}

\section{The case of non-rigid, non-self-stabilizing Algorithms}

Because of Theorem \ref{Thm:sht}, we cannot directly use our model checker in its current form to check this type of algorithms.

This is because a non-rigid algorithm can reach its rigid behavior in an unpredictable configuration. Our method for solving this issue is to notice that the number of those configurations is finite. More precisely, parameters include: 
\begin{itemize}
    \itemsep0em 
\item The current color of each robot
\item The pending color of each robot
\item The phase of each robot
\item The pending move of each robot
\end{itemize}

This means we have a number $N_{conf}$ of possible configurations. 

We use this fact by creating a counter, which starts at 0 and increases up to $N_{conf}$.

We start the validation process in the \FAR state. When executing the first movement resolution that leads to \NEAR, we create two branches: 

\begin{enumerate}
\item In the first branch, the robot reaches \NEAR and we continue the process
\item In the second branch, the robot actually did not reach \NEAR and is kept at \FAR{}. We increment the counter by 1.
\end{enumerate}

We repeat this process until the counter reaches $N_{conf}  - 1$. We have then created $2^{N_{conf}} - 1$ branches to verify. However, we have ensured that any possible rigid motion behavior configurations has been checked. Repeating this process for every possible non-rigid motion initial configuration is enough to ensure that any possibly failing execution would have been detected.

\subsection{Movement Resolution}
The movement resolution rules described in Figure~\ref{tab:movement} (p.~\pageref{tab:movement}) are implemented by the Promela code described in Listing~\ref{lst:movement} below, during the \ENDMOVE phase of the cycle.
\begin{lstlisting}[caption={Movement resolution\label{lst:movement}}]
if
:: (robot[me].is_moving) ->
    local position_t new_position = position;
    assert( robot[me].pending != STAY );
    if
    :: (position == NEAR || position == SAME) ->
        if
        :: (robot[me].pending == MISS) -> 
            { robot[other].pending = MISS } unless (robot[other].pending == STAY);
            new_position = NEAR;
        :: (robot[me].pending == TO_OTHER) ->
            { robot[other].pending = MISS } unless (robot[other].pending == STAY
                                                    || position == SAME);
            new_position = SAME;
        :: (robot[me].pending == TO_HALF) ->
            if 
            :: (robot[other].pending == TO_HALF) -> robot[other].pending = TO_OTHER
            :: (robot[other].pending == STAY)    -> skip /* do nothing */
            :: else                              -> robot[other].pending = MISS
            fi
        :: else -> assert( false )
        fi;
    fi;
    if
    :: (position != new_position) ->
        eventPositionChange:
            position = new_position
    :: else -> skip /* do nothing */
    fi
:: else -> skip
fi;
robot[me].is_moving = false;
robot[me].pending = STAY;
\end{lstlisting}

In Promela, the meaning of \texttt{if} and the guarded actions \texttt{guard -> action} that follow is different from other languages in the sense that, when several guards are enabled, the execution faces a non-deterministic choice and the exploration of the model checker branches into several executions to explore all enabled guards. The guard \texttt{else} is exclusive in that it is enabled only when no other guards are enabled.

\subsection{Verified Algorithms}


\begin{lstlisting}[caption={No Move Algorithm}]
inline Alg_NoMove(obs, command)
{
    command.move		= STAY;
    command.new_color	= BLACK
}
\end{lstlisting}

\begin{lstlisting}[caption={Move to Half Algorithm}]
inline Alg_ToHalf(obs, command)
{
    command.move		= TO_HALF;
    command.new_color	= BLACK
}
\end{lstlisting}

\begin{lstlisting}[caption={Move to Other Algorithm}]
inline Alg_ToOther(obs, command)
{
    command.move		= TO_OTHER;
    command.new_color	= BLACK
}
\end{lstlisting}

\begin{lstlisting}[caption={Viglietta's 2 colors algorithm \cite{Viglietta13} for \ASYNCLC}]
inline Alg_Vig2Cols(obs, command)
{
    command.move      = STAY;
    command.new_color = obs.color.me;
    if
    :: (obs.color.me == BLACK) ->
        if
        :: (obs.color.other == BLACK) -> command.new_color = WHITE
        :: (obs.color.other == WHITE) -> skip
        fi
    :: (obs.color.me == WHITE) ->
        if
        :: (obs.color.other == BLACK) -> command.move = TO_OTHER
        :: (obs.color.other == WHITE) -> command.move = TO_HALF; command.new_color = BLACK
        fi	
    :: else -> command.new_color = BLACK
    fi
}
\end{lstlisting}

\begin{lstlisting}[caption={Viglietta's 3 colors algorithm \cite{Viglietta13} for \ASYNC}]
inline Alg_Vig3Cols(obs, command)
{
    command.move      = STAY;
    command.new_color = obs.color.me;
    if
    :: (obs.color.me == BLACK) ->
        if
        :: (obs.color.other == BLACK)	-> command.move = TO_HALF; command.new_color = WHITE
        :: (obs.color.other == WHITE)	-> command.move = TO_OTHER
        :: (obs.color.other == RED)		-> skip
        fi
    :: (obs.color.me == WHITE) ->
        if
        :: (obs.color.other == BLACK)	-> skip
        :: (obs.color.other == WHITE)	-> command.new_color = RED
        :: (obs.color.other == RED)		-> command.move = TO_OTHER
        fi
    :: (obs.color.me == RED) -> 
        if
        :: (obs.color.other == BLACK)	-> command.move = TO_OTHER
        :: (obs.color.other == WHITE)	-> skip
        :: (obs.color.other == RED)		-> command.new_color = BLACK
        fi
    :: else -> command.new_color = BLACK
    fi
}
\end{lstlisting}

\clearpage
\begin{lstlisting}[caption={Heriban's 2 colors algorithm \cite{HeribanDT18} for \ASYNC}]
inline Alg_Optimal(obs, command)
{
    command.move      = STAY;
    command.new_color = obs.color.me;
    if
    :: (obs.color.me == BLACK) ->
        if
        :: (obs.color.other == BLACK) -> command.new_color = WHITE
        :: (obs.color.other == WHITE) -> skip
        fi
    :: (obs.color.me == WHITE) ->
        if
        :: obs.same_position -> skip
        :: else ->
            if
            :: (obs.color.other == BLACK) -> command.move = TO_OTHER
            :: (obs.color.other == WHITE) -> command.move = TO_HALF; command.new_color = BLACK
            fi
        fi
    :: else -> command.new_color = BLACK
    fi
}
\end{lstlisting}

\begin{lstlisting}[caption={Flocchini's external lights 3 colors algorithm \cite{DBLP:journals/tcs/FlocchiniSVY16} for \SSYNC}]
inline Alg_FloAlgo3Ext(obs, command)
{
    command.move      = STAY;
    command.new_color = obs.color.me;
    if
    :: (obs.color.other == BLACK)	-> command.move = TO_HALF; command.new_color = WHITE
    :: (obs.color.other == WHITE)	-> command.new_color = RED
    :: (obs.color.other == RED)		-> command.move = TO_OTHER; command.new_color = BLACK
    :: else -> command.new_color = BLACK
    fi
}
\end{lstlisting}

\begin{lstlisting}[caption={Okumura's external lights 5 colors algorithm \cite{DBLP:conf/opodis/OkumuraWD18} for \ASYNCLC}]
inline Alg_Wada5Ext(obs, command)
{
    command.move      = STAY;
    command.new_color = obs.color.me;
    if
    :: (obs.color.other == BLACK)	-> command.move = TO_HALF;    command.new_color = WHITE
    :: (obs.color.other == WHITE)	-> command.new_color = RED
    :: (obs.color.other == RED)		-> command.move = TO_OTHER;   command.new_color = YELLOW
    :: (obs.color.other == YELLOW)	-> command.new_color = GREEN
    :: (obs.color.other == GREEN)	-> command.new_color = BLACK
    :: else -> command.new_color = BLACK
    fi
}
\end{lstlisting}

\clearpage
\begin{lstlisting}[caption={Okumura's external lights 4 colors algorithm \cite{DBLP:conf/opodis/OkumuraWD18} for quasi-self-stabilizing \ASYNCLC}]
inline Alg_Oku4ColsX(obs, command)
{
    command.move      = STAY;
    command.new_color = obs.color.me;
    if
    :: (obs.color.other == BLACK)	-> command.move = TO_HALF;    command.new_color = WHITE
    :: (obs.color.other == WHITE)	-> command.new_color = RED
    :: (obs.color.other == RED)		-> command.move = TO_OTHER;   command.new_color = YELLOW
    :: (obs.color.other == YELLOW)	-> command.new_color = BLACK
    :: else -> command.new_color = BLACK
    fi
}
\end{lstlisting}

\begin{lstlisting}[caption={Okumura's external lights 3 colors algorithm \cite{DBLP:conf/opodis/OkumuraWD18} for non-self-stabilizing rigid \ASYNCLC}]
inline Alg_Oku3ColsX(obs, command)
{
    command.move      = STAY;
    command.new_color = obs.color.me;
    if
    :: (obs.color.other == BLACK)	-> command.move = TO_HALF; command.new_color = WHITE
    :: (obs.color.other == WHITE)	-> command.new_color = RED
    :: (obs.color.other == RED)		-> command.move = TO_OTHER; command.new_color = WHITE
    :: else -> command.new_color = BLACK
    fi
}
\end{lstlisting}

\subsection{Verification Model}

\subsubsection{State variables}

The state of the system is represented by the following \emph{explicit} variables:
\begin{itemize}
\item \verb|distance| $\in \{\NEAR, \SAME\}$\\
	The distance between the two robots. \NEAR means that it is equal or smaller and they have distinct positions, and \SAME that they share the same location. When checking non-rigid non-self-stabilizing algorithm, the distance \FAR can also be used.
	
\item \verb|robot[_].color| $\in \{\BLACK, \WHITE, \RED, \YELLOW, \GREEN\}$\\
	The observable color of the robot. The color \RED is used only in 3 colors, 4 colors, and 5 colors algorithms. The color \YELLOW is used only in 4 colors and 5 colors algorithms. The color \GREEN is used only in 5 colors algorithms.

\end{itemize}

and the following \emph{implicit} variables:
\begin{itemize}

\item \verb|robot[_].phase| $\in \{\LOOK, \COMPUTE, \BEGMOVE, \ENDMOVE\}$\\
	Keeps track of the next event to execute in the activation cycle of the robot. This variable is managed by the scheduler and is particularly important for the \ASYNC scheduler. 

\item \verb|robot[_].pending_move| $\in \{\STAY, \OTHER, \HALF, \MISS, \BOT\}$\\
	Holds the movement computed by the robot. This is not observable but used to resolve movements during the \ENDMOVE phases. The variable is updated during the \LOOK phase, based on the movement computed by the algorithm.

\item \verb|robot[_].pending_color| $\in \{\BLACK, \WHITE, \RED, \YELLOW, \GREEN, \BOT\}$\\
	New color for the robot, as computed by the algorithm. The color is computed during the \LOOK phase of the robot and used during the \COMPUTE phase to update the visible color of the robot.

\end{itemize}

Depending on the model and algorithms, this can lead to at most $2 * (5 * 4 * 5 * 6)^2 = 720'000$ different states.


\subsubsection{Activation Phases}

\begin{description}
\item[\LOOK]
	Reads the current state of the environment and saves it as an observation.
	Applies the algorithm to compute the pending move and the new color.
\item[\COMPUTE] 
	Updates the color of the robot in the environment.
\item[\BEGMOVE] 
	Begins moving. Unless the pending move is \STAY, the robot's position is undetermined until \ENDMOVE and any that occurs in the interval causes an automatic \MISS move for the other robot.
\item[\ENDMOVE]
	The pending move is resolved into a new position for the robot and the environment is updated accordingly. 
\end{description}

\subsection{Verification and compile options}

\begin{verbatim}
spin -a -DALGO=ALGORITHM -DSCHEDULER=SCHEDULER MainGathering.pml
clang -DMEMLIM=1024 -DXUSAFE -DNOREDUCE -O2 -w -o pan pan.c
./pan -m100000 -a -f -E -n gathering 
\end{verbatim}

\subsubsection{Verification output}

\paragraph{Vig2Cols in \ASYNC (fails)}
{\small
\begin{alltt}
Depth=   19582 States=    1e+06 Transitions= 3.86e+06 
Memory=   146.311 t=     2.25 R=   4e+05
pan:1: acceptance cycle (at depth 2723)
pan: wrote MainGathering.pml.trail

(Spin Version 6.4.9 -- 17 December 2018)
Warning: Search not completed

Full statespace search for:
        never claim             + (gathering)
        assertion violations    + (if within scope of claim)
        acceptance   cycles     + (fairness enabled)
        invalid end states      - (disabled by -E flag)

State-vector 107 byte, depth reached 25730, \textbf{errors: 1}
   189620 states, stored (1.45028e+06 visited)
  4046596 states, matched
  5496875 transitions (= visited+matched)
 34174756 atomic steps
hash conflicts:     16160 (resolved)

Stats on memory usage (in Megabytes):
   24.413       equivalent memory usage for states (stored*(State-vector 
                + overhead))
   17.660       actual memory usage for states (compression: 72.34%)
                state-vector as stored = 70 byte + 28 byte overhead
  128.000       memory used for hash table (-w24)
    6.104       memory used for DFS stack (-m100000)
  151.682       total actual memory usage



pan: elapsed time 3.25 seconds
pan: rate 446239.69 states/second
\end{alltt}
}

After a failed execution, SPIN provides a (very verbose) counter-example that can be investigated with the following command.
\begin{verbatim}
spin -t MainGathering.pml
\end{verbatim}

\paragraph{Her2Cols in \ASYNC (Succeeds)}
{\small
\begin{alltt}
Depth=   16958 States=    1e+06 Transitions= 3.84e+06 
Memory=   146.115 t=     2.26 R=   4e+05

(Spin Version 6.4.9 -- 17 December 2018)

Full statespace search for:
        never claim             + (gathering)
        assertion violations    + (if within scope of claim)
        acceptance   cycles     + (fairness enabled)
        invalid end states      - (disabled by -E flag)

State-vector 107 byte, depth reached 17016, errors: 0
   232931 states, stored (1.80493e+06 visited)
  5061150 states, matched
  6866078 transitions (= visited+matched)
 42744752 atomic steps
hash conflicts:     30286 (resolved)

Stats on memory usage (in Megabytes):
   29.989       equivalent memory usage for states (stored*(State-vector 
                + overhead))
   21.666       actual memory usage for states (compression: 72.25%)
                state-vector as stored = 70 byte + 28 byte overhead
  128.000       memory used for hash table (-w24)
    6.104       memory used for DFS stack (-m100000)
  155.686       total actual memory usage



pan: elapsed time 4.07 seconds
pan: rate 443471.25 states/second
\end{alltt}
}

\end{document}